\newtheorem{theorem}{Theorem}
\newtheorem{lemma}{Lemma}
\begin{document}
%
% paper title
% Titles are generally capitalized except for words such as a, an, and, as,
% at, but, by, for, in, nor, of, on, or, the, to and up, which are usually
% not capitalized unless they are the first or last word of the title.
% Linebreaks \\ can be used within to get better formatting as desired.
% Do not put math or special symbols in the title.
\title{Sparse Multipath Channel Estimation and Decoding for Broadband Vector OFDM Systems}
%
%
% author names and IEEE memberships
% note positions of commas and nonbreaking spaces ( ~ ) LaTeX will not break
% a structure at a ~ so this keeps an author's name from being broken across
% two lines.
% use \thanks{} to gain access to the first footnote area
% a separate \thanks must be used for each paragraph as LaTeX2e's \thanks
% was not built to handle multiple paragraphs
%
\author{Qi~Feng,
        Xiang-Gen~Xia,~\IEEEmembership{Fellow,~IEEE,}
        Zhihui~Ye,
        and~Naitong~Zhang%
\thanks{This work was supported by the China Scholarship Council (Grant No. 201406190083), in part by the Satellite Communication and Navigation Collaborative Innovation Center (Grant No. SatCN-201408), in part by the Program B for Outstanding PhD Candidate of Nanjing University (Grant No. 201501B013), and in part by the Special Research Foundation of Marine Public Service Sector (Grant No. 201205035).}%
\thanks{Q. Feng and N. Zhang are with the School of Electronic Science and Engineering, Nanjing University, Nanjing 210023,
P. R. China (e-mail: xiaoyehouzi@126.com, ntzhang@nju.edu.cn).}%
\thanks{X.-G. Xia is with the Department of Electrical and Computer Engineering, University of Delaware, Newark, DE 19716, USA (e-mail: xxia@ee.udel.edu).}%
\thanks{Z. Ye is with the School of Electronic and Optical Engineering, Nanjing University of Science and Technology, Nanjing 210094, P. R. China. She is also with the School of Electronic Science and Engineering, Nanjing University, Nanjing 210023,
P. R. China (e-mail: zhye@nju.edu.cn).}}

\maketitle

% As a general rule, do not put math, special symbols or citations
% in the abstract or keywords.
\begin{abstract}
Vector orthogonal frequency division multiplexing (V-OFDM) is a general system that builds a bridge between OFDM and single-carrier frequency domain equalization in terms of intersymbol interference and receiver complexity. In this paper, we investigate the sparse multipath channel estimation and decoding for broadband V-OFDM systems. Unlike the non-sparse channel estimation, sparse channel estimation only needs to recover the nonzero taps with reduced complexity. We first consider a simple noiseless case that the pilot signals are transmitted through a sparse channel with only a few nonzero taps, and then consider a more practical scenario that the pilot signals are transmitted through a sparse channel with additive white Gaussian noise interference. The exactly and approximately sparse inverse fast Fourier transform (SIFFT) can be employed for these two cases. The SIFFT-based algorithm recovers the nonzero channel coefficients and their corresponding coordinates directly, which is significant to the proposed partial intersection sphere (PIS) decoding approach. Unlike the maximum likelihood (ML) decoding that enumerates symbol constellation and estimates the transmitted symbols with the minimum distance, the PIS decoding first generates the set of possible transmitted symbols and then chooses the transmitted symbols only from this set with the minimum distance. The diversity order of the PIS decoding is determined by not only the number of nonzero taps, but also the coordinates of nonzero taps, and the bit error rate (BER) is also influenced by vector block size to some extent but roughly independent of the maximum time delay. Simulation results indicate that by choosing appropriate sphere radius, the BER performance of the PIS decoding outperforms the conventional zero-forcing decoding and minimum mean square error decoding, and approximates to the ML decoding with the increase of signal-to-noise ratio, but reduces the computational complexity significantly.
\end{abstract}

% Note that keywords are not normally used for peerreview papers.
\begin{IEEEkeywords}
Vector orthogonal frequency division multiplexing, sparse multipath channel, sparse inverse fast Fourier transform, partial intersection sphere decoding, diversity order.
\end{IEEEkeywords}

% For peer review papers, you can put extra information on the cover
% page as needed:
% \ifCLASSOPTIONpeerreview
% \begin{center} \bfseries EDICS Category: 3-BBND \end{center}
% \fi
%
% For peerreview papers, this IEEEtran command inserts a page break and
% creates the second title. It will be ignored for other modes.
\IEEEpeerreviewmaketitle

\section{Introduction}
% The very first letter is a 2 line initial drop letter followed
% by the rest of the first word in caps.
%
% form to use if the first word consists of a single letter:
% \IEEEPARstart{A}{demo} file is ....
%
% form to use if you need the single drop letter followed by
% normal text (unknown if ever used by IEEE):
% \IEEEPARstart{A}{}demo file is ....
%
% Some journals put the first two words in caps:
% \IEEEPARstart{T}{his demo} file is ....
%
% Here we have the typical use of a "T" for an initial drop letter
% and "HIS" in caps to complete the first word.
\IEEEPARstart{O}{rthogonal} frequency division multiplexing (OFDM) has been widely adopted in both cellular systems, such as Long-Term Evolution (LTE) and Wi-Fi systems \cite{Cimini1985,Astely2009}. The main advantage of OFDM modulation is to convert an intersymbol interference (ISI) channel into multiple ISI-free subchannels and thus reduces the demodulation complexity at the receiver \cite{Hwang2009}. However, since each symbol is only transmitted over a parallel flat fading subchannel, the conventional OFDM technique may not collect multipath diversity, it thus performs worse than single carrier transmission \cite{Wang2005}. Furthermore, OFDM has high peak-to-average power ratio (PAPR) of the transmitted signals, which may affect its applications in broadband wireless communications. Single-carrier frequency domain equalization (SC-FDE) is an alternative approach to deal with ISI with low transmission PAPR \cite{Pancaldi2008}. However, induced by both fast Fourier transform (FFT) and inverse FFT (IFFT) operations at the receiver, SC-FDE suffers from the drawback that transmitter and receiver have unbalanced complexities \cite{Falconer2002}. As a result, OFDM is more suitable for downlink with high transmission speed, whereas SC-FDE can be applied for uplink that reduces PAPR and transmitter complexity as in LTE \cite{Astely2009}.

Vector OFDM (V-OFDM) for single transmit antenna systems first proposed in \cite{Xia2001} converts an ISI channel to multiple vector subchannels where the vector size is a pre-designed parameter and flexible. For each vector subchannel, the information symbols of a vector may be (are) ISI together. Since the vector size is flexible, when it is $1$, V-OFDM coincides with the conventional OFDM. When the vector size is $2$, each vector subchannel may have two information symbols in ISI. When the vector size is large enough, say, the same as the IFFT size, then the maximal number of information symbols are in ISI and it is then equivalent to SC-FDE. Therefore, V-OFDM naturally builds a bridge between OFDM and SC-FDE in terms of both ISI level and receiver complexity \cite{Xia2001,Li2012}, and it has attracted recent interests. For V-OFDM, an adaptive vector channel allocation scheme was proposed for V-OFDM systems \cite{Zhang2002}. Some key techniques, such as carrier/sampling-frequency synchronization and guard-band configuration in V-OFDM system were designed and made comparisons with the conventional OFDM systems \cite{Zhang2005}. Iterative demodulation and decoding under turbo principle is an efficient way for V-OFDM receiver \cite{Zhang2006}. Constellation-rotated V-OFDM was proposed with improved multipath diversity \cite{Han2010,Han2014,Cheng2011}. Linear receivers and the corresponding diversity order analyses are recently given in \cite{Li2012} and phase noise influence is investigate in \cite{Ngebani2014}.

For a very broadband channel, the IFFT size of an OFDM system needs to be very large, which may cause practical implementation problems, such as high PAPR and high complexity. In contrast, for a V-OFDM, its IFFT size can be fixed and independent of a bandwidth, while its vector size can be increased to accommodate the increased bandwidth. In this paper, we are interested in V-OFDM over a broadband sparse channel in the sense that it has a large time delay spread but only a few of nonzero taps \cite{Win2002,Schreiber1995,Berger2010}. For sparse channels, there have been many studies in the literature, see for example \cite{Cotter2002,Raghavendra2005,Bajwa2010,Kannu2011,Hassibi2005,Vikalo2005,Barik2014,Prasad2014}. Recently, Sparse FFT (SFFT) theory was proposed by the Computer Science \& Artificial Intelligence Lab, Massachusetts Institute of Technology \cite{Hassanieh2012Jan}. If a signal has a small number $k$ of nonzero Fourier coefficients, the output of the Fourier transform can be represented succinctly using only $k$ coefficients. For such signals, the runtime is sublinear in the signal size $n$ rather than $\mathcal O(n\log n)$. Furthermore, several new algorithms for SFFT are presented, i.e., an $\mathcal O(k\log n)$-time algorithm for the exactly $k$-sparse case, an $\mathcal O(k\log n\log(n/k))$-time algorithm for the general case \cite{Hassanieh2014May}. In this paper, a sparse channel estimation and decoding scheme for V-OFDM systems is proposed. Inspired by the idea of SFFT under the condition of signals with only a few nonzero Fourier coefficients, we first use pilot symbols to obtain channel frequency response (CFR), and then estimate channel impulse response (CIR) by using sparse IFFT (SIFFT). Based on the estimation of nonzero channel coefficients and their corresponding coordinates, an efficient partial intersection sphere (PIS) decoding is investigated and it achieves the same diversity order as the maximum likelihood (ML) decoding. The main contributions of the paper are summarized as follows.

\begin{itemize}
\item We find a connection between the exactly and approximately sparse channel models in the estimation of a V-OFDM sparse multipath channel. For a multipath channel with only a few nonzero taps, if there is no noise during the transmission, then the sparse channel can be estimated by the exactly sparse multipath channel algorithm that corresponds to Algorithm 3.1 for exactly sparse FFT in \cite{Hassanieh2014May}. When there is additive white Gaussian noise (AWGN) during the transmission, the sparse multipath channel can be estimated by the approximately sparse multipath channel algorithm that corresponds to Algorithms 4.1$-$4.2 for generally sparse FFT in \cite{Hassanieh2014May}.
\item By using the SIFFT-based algorithms, one can directly recover the nonzero channel coefficients and their corresponding coordinates, which is significant to the PIS decoding process.
\item For the PIS decoding in V-OFDM systems, the bit error rate (BER) is dependent of $K$ nonzero taps in a sparse channel and the vector size $M$ to some extent, but roughly independent of the maximum delay $D$.
\item For any given small sphere radius, the proposed PIS decoding and ML decoding are of the same diversity order, which is equal to the cardinality of the set of reminder coordinates after mod $M$, but the PIS decoding can substantially reduce the computational complexity with probability $1$.
\end{itemize}

The reminder of the paper is organized as follows. In Section II, the system model of V-OFDM is reviewed. In Section III, SIFFT-based channel estimation schemes for the exactly sparse case and the approximately sparse case are introduced. In Section IV, a PIS decoding for V-OFDM systems is proposed and analyzed. In Section V, simulation results are presented and discussed. In Section VI, this paper is concluded.

\section{System Model}
We first briefly recall a V-OFDM system for single transmit antenna, which is shown in Fig. 1. The description of system model follows the notations in \cite{Li2012} below.
\begin{figure}[t]
\centering
\includegraphics[width=3.5in]{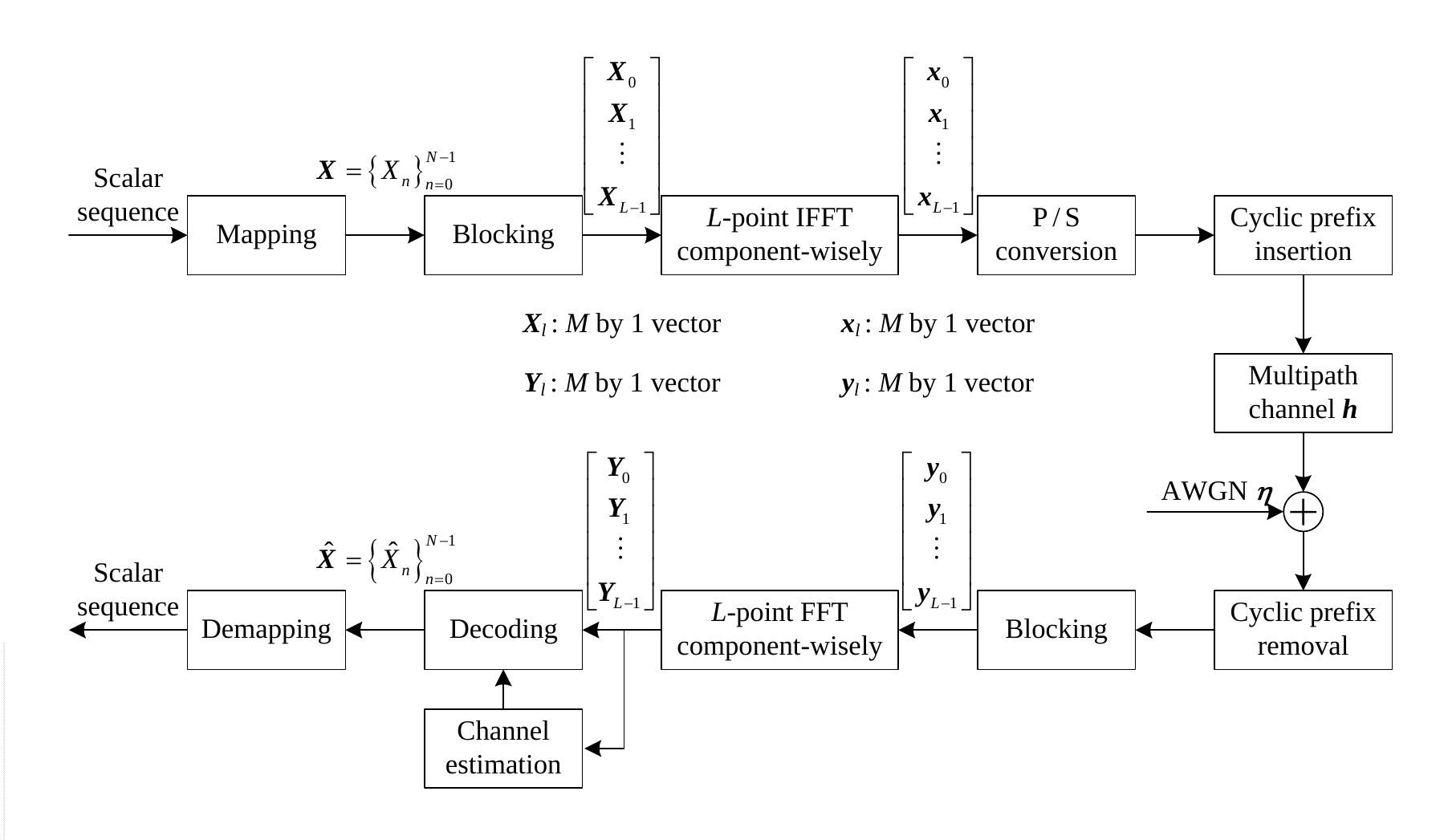}
\caption{Block diagram of vector OFDM modulation system.}
\end{figure}
\subsection{Vector OFDM Modulation}
In V-OFDM systems, $N$ symbols $\bm{X}=\left\{X_n\right\}_{n=0}^{N-1}$ are blocked into $L$ vectors (called vector blocks (VB)) of size $M$. Denote the $l$th transmitted VB in $\bm{X}$ as
\begin{equation}
\bm{X}_l=\left[X_{lM},X_{lM+1},\ldots,X_{lM+M-1}\right]^{\mathrm T},~l=0,1,\ldots,L-1
\end{equation}
where $(\cdot)^{\mathrm T}$ denotes the transpose. Assume the average power is normalized, i.e., $\mathbb{E}\left\{\left|X_n^2\right|\right\}=1,~n=0,1,\ldots,N-1$, where $\mathbb{E}\{\cdot\}$ denotes the mathematical expectation.

Accordingly, $\bm{x}_k$ is defined as the normalized VB-based IFFT of size $L$, i.e.,
\begin{equation}
\bm{x}_k=\frac{1}{\sqrt{L}}\sum\limits_{l=0}^{L-1}\bm{X}_l\mathrm e^{\mathrm j\frac{2\mathrm\pi}{L}kl},~k=0,1,\ldots,L-1.
\end{equation}
Here, $\bm{x}_k$ is a column vector of size $M$ represented as $\left[x_{kM},x_{kM+1},\ldots,x_{kM+M-1}\right]^{\mathrm T}$. After parallel to series (P/S) conversion, the transmitted signal sequence $\bm{x}=\left\{x_n\right\}_{n=0}^{N-1}$ is obtained as $\left[\bm{x}_0^{\mathrm T},\bm{x}_1^{\mathrm T},\ldots,\bm{x}_{L-1}^{\mathrm T}\right]^{\mathrm T}$. In order to avoid the interblock interference (IBI), the length of CP denoted by $\varGamma$ should not be shorter than the maximum time delay of a multipath channel. Note that $\varGamma$ does not need to be divisible by $M$. At the transmitter, the signal sequence $\bm{x}$ inserted by CP, is transmitted serially through the channel with the order $\left[x_{N-\varGamma},x_{N-\varGamma+1},\ldots,x_{N-1},x_0,x_1,\ldots,x_{N-1}\right]^{\mathrm T}$.

At the receiver, the received sequence is modeled by the transmitted signal through a frequency selective fading channel with complex AWGN. An inverse process as in transmitter is performed to recover the original symbols. After removing CP, the received sequence $\bm{y}=\left\{y_n\right\}_{n=0}^{N-1}$ is equal to the circular convolution of the transmitted signal and CIR with AWGN
\begin{equation}
y_n=x_n\circledast h_n+\xi_n,~n=0,1,\ldots,N-1
\end{equation}
where $\circledast$ denotes the circular convolution, CIR $\bm{h}=\left\{h_d\right\}_{d=0}^{D}$ and $D$ denotes the maximum time delay spread of the multipath channel. After zero padding of $\bm{h}$, CFR $\bm{H}=\left\{H_n\right\}_{n=0}^{N-1}$ is the $N$-point FFT (without normalization) of $\bm{h}$. Assume the additive noise $\bm{\xi}$ is independent and identically distributed (i.i.d.) random sequence whose entry $\xi_n\thicksim\mathcal{CN}\left(0,\sigma^2\right)$. Accordingly, define the transmitted signal-to-noise ratio (SNR) as $\rho=\frac{1}{\sigma^2}$. $\bm{y}$ is then blocked into $L$ column vectors of size $M$.
Denote the $k$th vector in $\bm{y}$ as
\begin{equation}
\bm{y}_k=\left[y_{kM},y_{kM+1},\ldots,y_{kM+M-1}\right]^{\mathrm T},~k=0,1,\ldots,L-1.
\end{equation}

Take the normalized component-wise vector FFT operation of size $L$ as
\begin{equation}
\bm{Y}_l=\frac{1}{\sqrt{L}}\sum\limits_{k=0}^{L-1}\bm{y}_k\mathrm e^{-\mathrm j\frac{2\mathrm\pi}{L}kl},~l=0,1,\ldots,L-1.
\end{equation}

The $l$th received VB in $\bm{Y}$ is also a column vector of size $M$ represented as $\bm{Y}_l=\left[Y_{lM},Y_{lM+1},\ldots,Y_{lM+M-1}\right]^{\mathrm T}$. It is derived from \cite{Xia2001} that the relationship between the $l$th transmitted VB $\bm{X}_l$ and received VB $\bm{Y}_l$ as
\begin{equation}
\bm{Y}_l=\bm{\mathcal{H}}_l\bm{X}_l+\bm{\Xi}_l,~l=0,1,\ldots,L-1
\end{equation}
where $\bm{\mathcal{H}}_l=\bm{\mathcal{H}}(z)\big|_{z=\mathrm e^{\mathrm j\frac{2\mathrm\pi}{L}l}}$ is a blocked channel matrix of the original ISI channel $H(z)$ as
\begin{equation}
\bm{\mathcal{H}}(z)=\begin{bmatrix}
\widetilde{H}_0(z)&z^{-1}\widetilde{H}_{M-1}(z)&\cdots&z^{-1}\widetilde{H}_1(z)\\
\widetilde{H}_1(z)&\widetilde{H}_0(z)&\cdots&z^{-1}\widetilde{H}_2(z)\\
\vdots&\vdots&\ddots&\vdots\\
\widetilde{H}_{M-2}(z)&\widetilde{H}_{M-3}(z)&\cdots&z^{-1}\widetilde{H}_{M-1}(z)\\
\widetilde{H}_{M-1}(z)&\widetilde{H}_{M-2}(z)&\cdots&\widetilde{H}_0(z)
\end{bmatrix}
\end{equation}
where $\widetilde{H}_m(z)=\sum\limits_kh_{kM+m}z^{-k}$ is the $m$th polyphase component of $H(z)$, $m=0,1,\ldots,M-1$. The additive noise $\bm{\Xi}_l$ in (6) is the blocked version of $\bm{\Xi}$ whose entries have the same power spectral density as in $\bm{\xi}$ that are i.i.d. complex Gaussian random variables.

Note that $\bm{\mathcal{H}}_l$ can be diagonalized as
\begin{equation}
\bm{\mathcal{H}}_l={\bf U}_l^{\mathrm H}{\bf H}_l{\bf U}_l
\end{equation}
where ${\bf U}_l$ is a unitary matrix whose entries $\left[{\bf U}\right]_{r,c}=\frac{1}{\sqrt{M}}\mathrm e^{-\mathrm j\frac{2\mathrm\pi}{N}(l+rL)c},~r,c=0,1,\ldots,M-1$, and $(\cdot)^{\mathrm H}$ denotes the conjugate transpose, ${\bf H}_l$ is an $M\times M$ diagonal matrix defined as
\begin{equation}
{\bf H}_l=\mathrm{diag}\left\{H_l,H_{l+L},\ldots,H_{l+(M-1)L}\right\}.
\end{equation}

Furthermore, the unitary matrix ${\bf U}_l$ can be rewritten as
\begin{equation}
{\bf U}_l={\bf F}_M{\bf \Lambda}_l
\end{equation}
where ${\bf F}_M$ denotes the $M\times M$ normalized discrete Fourier transform (DFT) matrix whose entries $\left[{\bf F}_M\right]_{r,c}=\frac{1}{\sqrt{M}}\mathrm e^{-\mathrm j\frac{2\mathrm\pi}{M}rc},~r,c=0,1,\ldots,M-1$, and ${\bf\Lambda}_l$ is a diagonal matrix defined as
\begin{equation}
{\bf\Lambda}_l=\mathrm{diag}\left\{1,\mathrm e^{-\mathrm j\frac{2\mathrm\pi}{N}l},\ldots,\mathrm e^{-\mathrm j\frac{2\mathrm\pi}{N}(M-1)l}\right\}.
\end{equation}

It can be seen from (6) that the original ISI channel $H(z)$ of $D+1$ symbols interfered together is converted to $L$ vector subchannels, each of which may have $M$ symbols interfered together. Note that $M$ is the vector size and can be flexibly designed. When $M=1$, (6) is back to the original OFDM, i.e., no ISI occurs in each subchannel. When $M=N$, all $D+1$ symbols are interfered together and it is back to the SC-FDE.
\subsection{Pilot Pattern}
Now, we rewrite the relationship of inputs and outputs in (6) for the better understanding of channel transmission structure
\begin{equation}
{\bf U}_l\bm{Y}_l={\bf H}_l{\bf U}_l\bm{X}_l+{\bf U}_l\bm{\Xi}_l,~l=0,1,\ldots,L-1.
\end{equation}

It is straightforward to show that after the unitary transformation, the $l$th VB $\bm{X}_l$ is transmitted parallel over the subchannels $H_l,H_{l+L},\ldots,H_{l+(M-1)L}$. Mathematically, ${\bf U}_l$ is a kind of rotation matrix, ${\bf H}_l$ can be thus viewed as the equivalent channel Fourier coefficients.

Denote $P$ as the number of pilot channels and assume $L$ is divisible by $P$. If the $l_p$th VB $\bm{X}_{l_p}=\left[X_{l_pM},X_{l_pM+1},\ldots,X_{l_pM+M-1}\right]^{\mathrm T}$ is allocated to transmit pilot symbols, then the pilot symbols are transmitted parallel over the equivalent channels $H_{l_p},H_{l_p+L},\ldots,H_{l_p+(M-1)L}$. Furthermore, if $P$ pilot channels are evenly distributed over $L$ subchannels, i.e.,
\begin{equation}
l_p=\frac{pL}{P},~p=0,1,\ldots,P-1
\end{equation}
then the equivalent channels allocated to transmit pilot symbols are aligned as
\begin{align}
\bm{H_P}=&\left[H_{l_0},H_{l_1},\ldots,H_{l_{P-1}}, H_{l_0+L},H_{l_1+L},\ldots,H_{l_{P-1}+L},\ldots\right.\nonumber\\
&\left.\ldots,H_{l_0+(M-1)L},H_{l_1+(M-1)L},\ldots,H_{l_{P-1}+(M-1)L}\right]^{\mathrm T}\nonumber\\
=&\big[H_0,H_{\frac{L}{P}},\ldots,H_{N-\frac{L}{P}}\big]^{\mathrm T}
\end{align}

Therefore, it is not difficult to find that pilot symbols are also evenly distributed over the equivalent channels. Fig. 2 shows the pilot pattern for V-OFDM systems with parameters $L=8,M=2,P=2$. Due to the even distribution, $\bm{H_P}$ can be regarded as the downsampling of the CFR $\bm H$ such that for most cases $D<MP$, we usually only need to perform IFFT with size $MP$ rather than $N$ to exactly estimate the CIR $\bm{h}$.
\begin{figure}[t]
\centering
\includegraphics[width=3.5in]{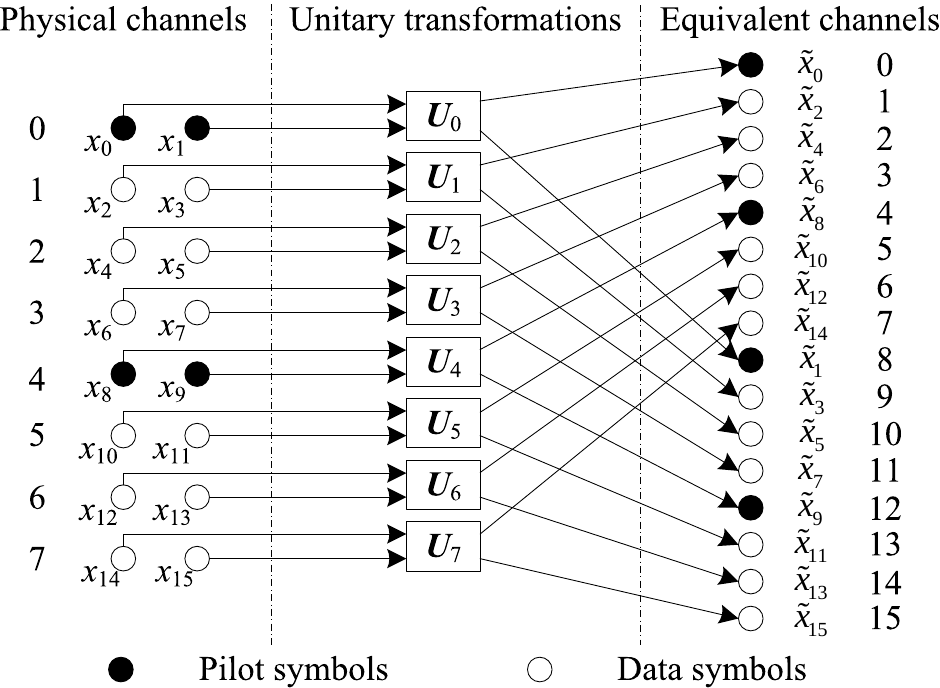}
\caption{Pilot pattern for vector OFDM system with $L=8,M=2,P=2$.}
\end{figure}
\section{Sparse IFFT for Channel Estimation}
With the increase of communication bandwidth, the number of equivalent channels $N$ needs to be increased and a signal sequence can be transmitted over more parallel channels simultaneously. Accordingly, either the number of VBs $L$ or the VB size $M$ increases proportionally. More parallel channels means higher rate for transmission, which however, increases the computational complexity of both channel estimation and decoding. In this section, an SIFFT-based approach is proposed for sparse multipath channel estimation that can directly obtain the nonzero channel coefficients and their corresponding coordinates.

Denote $\widetilde{\bm Y}_l={\bf U}_l\bm{Y}_l$, $\widetilde{\bm X}_l={\bf U}_l\bm{X}_l$, and $\widetilde{\bm \Xi}_l={\bf U}_l\bm{\Xi}_l$. Then (12) is further rewritten as
\begin{equation}
\widetilde{\bm Y}_l={\bf H}_l\widetilde{\bm X}_l+\widetilde{\bm \Xi}_l,~l=0,1,\ldots,L-1.
\end{equation}
Note that after the unitary transformation, the entries of $\widetilde{\bm \Xi}_l$ are also i.i.d. complex Gaussian random variables. $\bm H_l$ is defined as a column vector $\left[H_l,H_{l+L},\ldots,H_{l+(M-1)L}\right]^{\mathrm T}$ and $\widehat{\bm H}_l$ is the estimator of $\bm H_l$. It is convenient to estimate $\bm H_l$ by the least squares approach such that $\widehat{\bm H}_l=\big[\mathrm{diag}\big\{\widetilde{\bm X}_l\big\}\big]^{-1}\widetilde{\bm Y}_l$, $l=0,1,\ldots,L-1$.

Consider the pilot channels are evenly distributed over $L$ channels as (13). Denote $\widehat{\bm H}_{\bm P}$ and $\widehat{\bm h}$ as the estimators of $\bm{H_P}$ and $\bm{h}$, respectively. According to (14), $\bm H_{\bm P}$ is comprised of $\bm H_{l_p},~p=0,1,\ldots,P-1$, and can be estimated by $\widehat{\bm H}_{l_p}=\big[\mathrm{diag}\big\{\widetilde{\bm X}_{l_p}\big\}\big]^{-1}\widetilde{\bm Y}_{l_p}$, $p=0,1,\ldots,P-1$.

With the knowledge of column vector $\bm{H_P}$, we can further obtain column vector $\bm{h}$ by implementing the IFFT operation without normalization. Since the additive noise $\widetilde{\bm\Xi}_{l_p}$ is an $M\times 1$ i.i.d. random sequence whose entries $\big[\widetilde{\bm \Xi}_{l_p}\big]_m\thicksim\mathcal{CN}\left(0,\sigma^2\right)$, $m=0,1,\ldots,M-1$, it is not difficult to check that $\widehat{\bm H}_{l_p}$ is an unbiased estimator of $\bm H_{l_p}$, i.e., $\mathbb{E}\big\{\widehat{\bm H}_{l_p}\big\}=\bm H_{l_p}$, $p=0,1,\ldots,P-1$.
Besides, the mean squared error (MSE) of $\widehat{\bm H}_{l_p}$, denoted by $\mathbf\Sigma_{l_p}$, can be derived as
\begin{align}
\mathbf\Sigma_{l_p}&=\mathbb{E}\Big[\big(\widehat{\bm H}_{l_p}-\bm H_{l_p}\big)\big(\widehat{\bm H}_{l_p}-\bm H_{l_p}\big)^{\mathrm H}\Big]\nonumber\\
&=\sigma^2\Big[\mathrm{diag}\Big\{\big|\widetilde{\bm X}_{l_p}\big|^2\Big\}\Big]^{-1}
\end{align}

Since pilot signals are transmitted through a multipath channel with AWGN, our goal is to design the pilot symbols to minimize the MSE of estimator $\widehat{\bm h}$, i.e.,
\begin{equation}
\min\limits_{
\bm X_{l_p}\in \mathbb{X}^{M},~p=0,1,\ldots,P-1.
}\mathrm{tr}\Big\{\mathbb{E}\Big[\big(\widehat{\bm h}-\bm h\big)\big(\widehat{\bm h}-\bm h\big)^{\mathrm H}\Big]\Big\}
\end{equation}
where $\mathbb{X}$ denotes the constellation of the transmitted symbol $X_n,~n=0,1,\ldots,N-1$, $\mathrm{tr}\{\cdot\}$ is the trace of square matrix defined as the sum of the main diagonal. Obviously, $\widehat{\bm h}$ is also an unbiased estimator of $\bm h$ due to the linear transformation
\begin{align}
\mathbb{E}\big\{\widehat{\bm h}\big\}&=\frac{\mathbb{E}\big\{\mathbf F_{MP}^{-1}\widehat{\bm H}_{\bm P}\big\}}{\sqrt{MP}}=\frac{\mathbf F_{MP}^{-1}\mathbb{E}\big\{\widehat{\bm H}_{\bm P}\big\}}{\sqrt{MP}}\nonumber\\
&=\frac{\mathbf F_{MP}^{-1}\bm H_{\bm P}}{\sqrt{MP}}=\bm h
\end{align}
where $\mathbf F_{MP}^{-1}$ denotes the $MP\times MP$ normalized inverse DFT (IDFT) matrix whose entries $\big[\mathbf F_{MP}^{-1}\big]_{r,c}=\frac{1}{\sqrt{MP}}\mathrm e^{\mathrm j\frac{2\mathrm\pi}{MP}rc},~r,c=0,1,\ldots,MP-1$.

Now, we check the MSE of estimator $\widehat{\bm h}$ as
\begin{equation}
\mathbb{E}\Big[\big(\widehat{\bm h}-\bm h\big)\big(\widehat{\bm h}-\bm h\big)^{\mathrm H}\Big]=\frac{\mathbf F_{MP} \mathbf\Sigma\mathbf F_{MP}^{-1}}{MP}
\end{equation}
where $\mathbf\Sigma$ is a diagonal matrix whose diagonal entries $\left[\mathbf\Sigma\right]_{p+mP}=\left[\mathbf \Sigma_{l_p}\right]_m,~m=0,1,\ldots,M-1,~p=0,1,\ldots,P-1$. It is not difficult to find that the diagonal entries in (19) are the same, hence the entries of $\big(\widehat{\bm h}-\bm h\big)$ are identically distributed with complex Gaussian random variables, but may not be independent of each other. The trace of (19) can also be simplified as
\begin{equation}
\mathrm{tr}\Big\{\mathbb{E}\Big[\big(\widehat{\bm h}-\bm h\big)\big(\widehat{\bm h}-\bm h\big)^{\mathrm H}\Big]\Big\}=\frac{\mathrm{tr}\left\{\mathbf\Sigma\right\}}{MP}
\end{equation}
Note that (20) is only dependent on the sum of the MSE of each pilot channel. Therefore, (17) can be optimized by designing the pilot symbols for each channel, respectively, such that $\min\limits_{\bm X_{l_p}\in \mathbb{X}^{M}}\frac{\mathrm{tr}\left\{\mathbf\Sigma_{l_p}\right\}}{MP}$, $p=0,1,\ldots,P-1$.

Consider the total number of transmitted signals $N=1024$ is fixed, and $P$ pilot channels are evenly distributed over $L$ channels with $l_p=16p$. Assume the transmitted sequence $\bm X=\left\{x_n\right\}_{n=0}^{N-1}$ is binary phase-shift keying (BPSK) signals in V-OFDM system. For the $l_p$th pilot channel, the normalized expectation of MSE over $M$ symbols, defined as $\sigma_{l_p}^2=\frac{\mathrm{tr}\left\{\mathbf\Sigma_{l_p}\right\}}{M\sigma^2}$, should be minimized. According to Parseval's theorem, $\sigma_{l_p}^2\geqslant 1$. Note that such pilot symbol design is not unique. Table I lists a type of pilot symbol design for different $L$ and $M$. It is shown that such design can keep $\sigma_{l_p}^2$ within $1\thicksim2$, and does not increase with $M$. In Section III B, we will find that with such pilot symbol design, the SNR and the power ratio of the dominant entries to the rest entries in $\widehat{\bm h}$ are in the same order of magnitude, regardless of the parameters $M$ and $P$, which is accordance with approximately sparse channel.

\begin{table*}[t]
\centering
\caption{A Type of Pilot Symbols Design for BPSK Modulated Vector OFDM System}
\begin{subtable}[$L=256,M=4,N=1024,P=16$]{
\resizebox{\textwidth}{!}{
\begin{tabular}{||c|c|c||c|c|c||c|c|c||c|c|c||}
\hline
$p$ & $\bm{X}_{l_p}$ & $\sigma_{l_p}^{2}$ & $p$ & $\bm{X}_{l_p}$ & $\sigma_{l_p}^{2}$ & $p$ & $\bm{X}_{l_p}$ & $\sigma_{l_p}^{2}$ & $p$ & $\bm{X}_{l_p}$ & $\sigma_{l_p}^{2}$\\
\hline
$0$ & $~+1~-1~-1~-1~$ & $1.0000$ & $4$ & $~+1~-1~-1~-1~$ & $1.4118$ & $8$ & $~+1~-1~-1~-1~$ & $2.0000$ & $12$ & $~+1~-1~-1~-1~$ & $1.4118$\\
\hline
$1$ & $~+1~+1~+1~-1~$ & $1.0198$ & $5$ & $~-1~+1~-1~-1~$ & $1.6723$ & $9$ & $~-1~+1~-1~-1~$ & $1.9918$ & $13$ & $~+1~-1~-1~-1~$ & $1.2104$\\
\hline
$2$ & $~+1~-1~-1~-1~$ & $1.0848$ & $6$ & $~+1~-1~-1~-1~$ & $1.8986$ & $10$ & $~+1~-1~-1~-1~$ & $1.8986$ & $14$ & $~+1~-1~-1~-1~$ & $1.0848$\\
\hline
$3$ & $~-1~+1~-1~-1~$ & $1.2104$ & $7$ & $~+1~-1~-1~-1~$ & $1.9918$ & $11$ & $~+1~-1~-1~-1~$ & $1.6723$ & $15$ & $~+1~-1~-1~-1~$ & $1.0198$\\
\hline
\end{tabular}}}
\end{subtable}
\begin{subtable}[$L=128,M=8,N=1024,P=8$]{
\begin{tabular}{||c|c|c||c|c|c||}
\hline
$p$ & $\bm{X}_{l_p}$ & $\sigma_{l_p}^{2}$ & $p$ & $\bm{X}_{l_p}$ & $\sigma_{l_p}^{2}$\\
\hline
$0$ & $~+1~+1~-1~+1~-1~-1~-1~-1~$ & $1.3333$ & $4$ & $~-1~+1~+1~-1~-1~-1~-1~-1~$ & $1.4118$\\
\hline
$1$ & $~-1~-1~-1~+1~+1~-1~+1~-1~$ & $1.2921$ & $5$ & $~+1~-1~+1~+1~-1~-1~-1~-1~$ & $1.3513$\\
\hline
$2$ & $~-1~+1~-1~+1~+1~-1~-1~-1~$ & $1.2736$ & $6$ & $~+1~-1~+1~+1~-1~-1~-1~-1~$ & $1.2736$\\
\hline
$3$ & $~-1~+1~-1~+1~+1~-1~-1~-1~$ & $1.3513$ & $7$ & $~+1~-1~+1~+1~-1~-1~-1~-1~$ & $1.2921$\\
\hline
\end{tabular}}
\end{subtable}
\begin{subtable}[$L=64,M=16,N=1024,P=4$]{
\begin{tabular}{||c|c|c||}
\hline
$p$ & $\bm{X}_{l_p}$ & $\sigma_{l_p}^{2}$\\
\hline
$0$ & $~+1~+1~-1~-1~+1~+1~-1~-1~+1~-1~+1~-1~-1~-1~-1~-1~$ & $1.2589$\\
\hline
$1$ & $~+1~-1~+1~+1~+1~-1~+1~+1~-1~+1~+1~+1~+1~-1~-1~-1~$ & $1.2566$\\
\hline
$2$ & $~+1~-1~-1~+1~-1~+1~-1~-1~-1~+1~+1~-1~-1~-1~-1~-1~$ & $1.1974$\\
\hline
$3$ & $~-1~+1~-1~-1~+1~-1~+1~+1~+1~-1~-1~-1~+1~-1~-1~-1~$ & $1.2566$\\
\hline
\end{tabular}}
\end{subtable}
\end{table*}

For the conventional OFDM signal, the MSEs of the CFR estimator for all subchannels are the same. For the V-OFDM, however, the entries in $\widetilde{\bm X}_l$ may no longer be constant modulus since unitary transformation $\bm U_l$ is evolved in the original pilot symbols, and $\mathbf\Sigma_{l_p}$ varies from subchannel to subchannel. Similar to the channel spectral nulls in the OFDM systems, if symbol spectral nulls are existed in $\widetilde{\bm X}_l$, the overall estimation error in the $l$th subchannel may be very large. For the sparse channel that the pilot symbol are not well designed, the noise component may be comparable to the dominant component in the estimator $\widehat{\bm h}$, and in what follows, $\widehat{\bm h}$ may be obtained as a non-sparse channel.

For a very broadband channel, $PM$ may become very large. In this case, it becomes expensive to implement the large size IFFT directly. Inspired by the idea of SFFT proposed in \cite{Hassanieh2014May}, for a sparse multipath channel, we can perform SIFFT to estimate CIR from CFR. In what follows, we will focus on such sparse multipath channel estimation with and without AWGN in the transmission, respectively.
\subsection{Exactly Sparse Multipath Channel}
We first consider a simple noiseless sparse multipath channel, i.e., $\bm\xi=\bm0$, which is called an exactly sparse multipath channel. The pilot signals are transmitted through a sparse channel with only $K$ nonzero taps spread but without additive noise. If IFFT is performed to estimate CIR from CFR, in this case, it is no doubt that the estimator $\widehat{\bm h}=\frac{\mathbf F_{MP}^{-1}\widehat{\bm H}_{\bm P}}{\sqrt{MP}}$ is also a column vector of size $MP$ with only $K$ nonzero entries.

The SIFFT-based channel estimation for exactly sparse multipath channel is illustrated in Algorithm 1. The input $\widehat{\bm H}_{\bm P}$ is first permuted by $P_{\sigma,0,b}$, then multiplied by flat window filtering $\bm G_{B,\alpha,\varepsilon}$. After the permutation and filtering, the nonzero channel taps can be sampled at the interval $\frac{n}{B}$. Substituting the permutation operator $P_{\sigma,1,b}$ for $P_{\sigma,0,b}$ and repeating the above process, the nonzero coordinates can be recovered from the phase difference between these two permutations, and their corresponding values are obtained by permutation $P_{\sigma,0,b}$. After repeating $1+\log K$ times, one can eventually recover $\widehat{\bm h}$ with exact $K$-sparse. The algorithm includes three functions:
\begin{itemize}
\item\textsc{ExactlySparseIFFT:} Iterate \textsc{CoordinateValue} and update $\widehat{\bm h}$, repeat $1+\log K$ times and eventually find $\widehat{\bm h}$ with exact $K$-sparse.
\item\textsc{CoordinateValue:} Access to \textsc{HashToBins} and obtain nonzero coordinates and their corresponding values. It can find more than half of nonzero entries in $\widehat{\bm h}$ each time.
\item\textsc{HashToBins:} Permute $\widehat{\bm H}_{\bm P}$ and guarantee that nonzero entries in $\widehat{\bm h}$ are separated into different bins and then compute $B$-dimensional IFFT in $\mathcal O(B\log B)$, where $B$ denotes the number of bins and is set proportional to $K$.
\end{itemize}

Similar to \cite{Hassanieh2014May}, the proposed algorithm has two fundamental steps, i.e., permutation and flat window filtering. The purpose of permutation is to separate nonzero coefficients into different bins randomly. The design of filtering is a tradeoff between the filter flatness and the support of the window. Rather than the exactly sparse algorithm presented in \cite{Hassanieh2014May} that recovers the sparse signal with only a few nonzero Fourier coefficients, Algorithm 1 is aimed to estimate the CIR from the CFR by using the pilot symbols. The main differences between them are listed as follows.
\begin{itemize}
\item Instead of the permutation operator presented in \cite{Hassanieh2014May}, in this subsection, we redefine the permutation operator $P_{\sigma,a,b}$ as
\begin{equation}
P_{\sigma,a,b}(\bm X)_k=X_{\pi_{\sigma,a}(k)}\mathrm e^{\mathrm j\frac{2\mathrm\pi}{n}bk}
\end{equation}
where $\bm X=\left\{X_k\right\}_{k=0}^{n-1}$ is a discrete sequence in frequency domain with size $n$, and $\pi_{\sigma,a}(k)=(\sigma k-a)\bmod n$. Denote $p_{\sigma,a,b}(\bm x)$ as the IDFT of $P_{\sigma,a,b}(\bm X)$. It is not hard to derive that
\begin{equation}
p_{\sigma,a,b}(\bm x)_{\pi_{\sigma,b}(k)}=x_k\mathrm e^{\mathrm j\frac{2\mathrm\pi}{n}ak}
\end{equation}
where $\pi_{\sigma,b}(k)=(\sigma k-b)\bmod n$, and $\bm x$ is the IDFT of $\bm X$. Compared with the definition of permutation in \cite{Hassanieh2014May}, when computing the coordinates of nonzero channel coefficients in Algorithm 1, the proposed permutation can be recovered directly without the required dictionary.
\item Different from the exactly sparse algorithm in \cite{Hassanieh2014May} being only suitable for integers, we further expand the application to complex domain since any nonzero tap in sparse channel is a complex number. More specifically, denote the resolution $\delta$ as the minimum value that nonzero entries can be detected, if $\delta$ is set less than or equal to the minimum magnitude of nonzero channel taps, then all nonzero channel coefficients can be recovered with high probability. For the flat window with Gaussian filtering, the sample sequence should be collected with the length at least $\mathcal O(\frac{B}{\alpha}\log\frac{MP}{\varepsilon})$, where $\varepsilon=\frac{\delta}{4n^2\Delta}$, it may thus increase the sample sequence length if the sparse channel exists small nonzero taps. Therefore, the proposed algorithm can be suitable for the complex channel at the expense of a potential higher complexity.
\end{itemize}

The exactly sparse case in \cite{Hassanieh2014May} is the case when a signal has only a few nonzero Fourier coefficients. Accordingly, Algorithm 1 is suitable for the case that the estimator $\widehat{\bm h}$ has only a few nonzero taps. Therefore, it is straightforward to employ Algorithm 1 to recover $K$ nonzero entries in $\widehat{\bm h}$.

It has been proved in \cite{Hassanieh2014May} that the complexity of exactly sparse algorithm is $\mathcal O(K\log MP)$. Furthermore, if the length of symbol sequence $MP$ is sufficiently large such that $MP\geqslant\mathcal O(\frac{B}{\alpha}\log\frac{MP}{\varepsilon})$ is satisfied, Algorithm 1 can recover the correct coordinates and their corresponding values with high probability.
\begin{algorithm}[t]
\caption{Exactly Sparse Multipath Channel}
\begin{algorithmic}
\Require $\widehat{\bm H}_{\bm P},~K,~M,~P$
\Ensure $\widehat{\bm h}$
\Function{ExactlySparseIFFT}{$\widehat{\bm H}_{\bm P},K,M,P$}
\State \textbf{initialization:} $\widehat{\bm h}\gets0,~n\gets MP$
\For {$t\gets0,1,\dots,\log k$}
\State $k\gets \frac{K}{2^t},~\alpha\propto\frac{1}{2^t}$
\State $\widehat{\bm h}\gets\widehat{\bm h}+\textsc{CoordinateValue}(\widehat{\bm H}_{\bm P},\widehat{\bm h},k,n,\alpha)$
\EndFor
\State $\widehat{\bm h}\gets\mathop{\arg\max}\limits_{|\mathcal J|=K}\big\|\widehat{\bm h}_{\mathcal J}\big\|_2$\\
\Return $\widehat{\bm h}$
\EndFunction
\Function{CoordinateValue}{$\widehat{\bm H}_{\bm P},\widehat{\bm h},k,n,\alpha$}
\State $B\propto k$
\State $\varepsilon\gets \frac{\delta}{4n^2\Delta}$, for $\Delta\geqslant\max\big|\widehat{\bm h}\big|,~\delta\leqslant\min\big|\widehat{\bm h}\big|$
\State Choose $\sigma$ randomly from $\{1,3,\ldots,n-1\}$
\State Choose $b$ randomly from $\{0,1,\ldots,n-1\}$
\State $\bm w\gets\textsc{HashToBins}(\widehat{\bm H}_{\bm P},\widehat{\bm h},n,P_{\sigma,0,b},B,\alpha,\varepsilon)$
\State $\bm w'\gets\textsc{HashToBins}(\widehat{\bm H}_{\bm P},\widehat{\bm h},n,P_{\sigma,1,b},B,\alpha,\varepsilon)$
\State \textbf{initialization:} $\widehat{\bm h}\gets0$
\State $\mathcal{J}=\left\{j\big|\left|w_j\right|\geqslant\frac{\delta}{2}\right\}$
\ForAll {$j\in\mathcal{J}$}
\State $i\gets\mathrm{round}\big(\frac{n}{2\pi}\angle\frac{w'_j}{w_j}\big)\bmod n$
\State $\widehat h_i\gets w_j$
\EndFor\\
\Return $\widehat{\bm h}$
\EndFunction
\Function{HashToBins}{$\widehat{\bm H}_{\bm P},\widehat{\bm h},n,P_{\sigma,a,b},B,\alpha,\varepsilon$}
\State $\bm U\gets\bm G_{B,\alpha,\varepsilon}P_{\sigma,a,b}(\widehat{\bm H}_{\bm P})$
\For {$i\gets0,1,\ldots,B-1$}
\State $V_i\gets\sum\limits_jU_{i+Bj}$
\EndFor
\State $\bm v\gets\mathcal{F}^{-1}(\bm V)$
\For {$j\gets0,1,\ldots,B-1$}
\State $w_j\gets v_j-\big[\bm g_{B,\alpha,\varepsilon}*p_{\sigma,a,b}\big(\widehat{\bm h}\big)\big]_{\frac{n}{B}j}$
\EndFor\\
\Return $\bm w$
\EndFunction
\end{algorithmic}
\end{algorithm}
\subsection{Approximately Sparse Multipath Channel}
Now, we consider a more practical scenario that the pilot signals are transmitted through a sparse channel with only $K$ nonzero taps spread and AWGN, which is called an approximately sparse multipath channel. Since AWGN is induced during the transmission, the estimator $\widehat{\bm h}$ is no longer with only $K$ nonzero entries. In fact, the estimator $\widehat{\bm h}=\frac{\mathbf F_{MP}^{-1}\widehat{\bm H}_{\bm P}}{\sqrt{MP}}$ has $K$ dominant entries and the rest entries are small, when the SNR is not low. For the approximately sparse vector $\widehat{\bm h}$, define the parameter $\eta$ as the maximum expectation power ratio of the $K$ selected entries to the rest entries such that
\begin{equation}
\eta=\max_{|\mathcal J|=K}\mathbb E\left\{\frac{\big\|\widehat{\bm h}_{\mathcal J}\big\|_2^2}{\big\|\widehat{\bm h}-\widehat{\bm h}_{\mathcal J}\big\|_2^2}\right\}
\end{equation}
where $\|\cdot\|_2$ denotes the $\ell^2$ norm of a vector. $\eta$ reflects how approximately the sparse multipath channel is and determines the root-mean-square error (RMSE) of SIFFT algorithm. In particular, exactly sparse is an extreme case for $\eta\rightarrow\infty$.

The SIFFT-based channel estimation for approximately sparse multipath channel is shown in Algorithm 2 that has the following basic idea. To deal with noise, the algorithm estimates the nonzero coordinates and their corresponding values separately. For the coordinate estimation, all the coordinates are first divided into small regions. The input $\widehat{\bm H}_{\bm P}$ is permutated randomly by $P_{\sigma,a,b}$ and $P_{\sigma,a+\tau,b}$, respectively, then multiplied by flat window filtering $\bm G_{B,\alpha,\varepsilon}$. The phase difference between these two permutations determines the circular distance to each region. Select the appropriate regions with the nearest circular distance and get one vote. After repeating the above process $T_R$ times, choose the final regions with more than $\frac{T_R}{2}$ votes. By narrowing the regions of nonzero coordinates in each iteration, the algorithm eventually obtains the nonzero coordinates. For the value estimation, after the permutation and filtering, the nonzero values corresponding to the coordinates estimated before are obtained by permutation $P_{\sigma,a,b}$. Repeating $T_V$ times and choose the median as the estimations of the values such that the estimation error decreases exponentially with $T_V$. Repeat the above process $T_A$ times and ultimately recover $\widehat{\bm h}$ with $K$ dominant taps. The algorithm includes five functions, in which \textsc{HashToBins} is defined the same as in Algorithm 1.

\begin{itemize}
\item\textsc{ApproximatelySparseIFFT:} Iterate \textsc{Coordinate} and \textsc{Value}, then update $\widehat{\bm h}$. In each iteration, reduce $k$-sparse to $\frac{k}{4}$-sparse, repeat $T_A$ times and eventually find $\widehat{\bm h}$ with $K$ dominant entries.
\item\textsc{Coordinate:} Access to \textsc{Range} and narrow the range of dominant coordinates, repeat $T_C$ times until the dominant coordinates are uniquely determined.
\item\textsc{Range:} Permute $\widehat{\bm H}_{\bm P}$ randomly with $T_R$ times, divide all the coordinates into several regions, find the appropriate regions with the nearest circular distance and then gets one vote. After repeating $T_R$ times, choose the final regions with more than $\frac{T_R}{2}$ votes.
\item\textsc{Value:} Access to \textsc{HashToBins} and obtain the estimations of the values, repeat $T_V$ times and take the median of such values with real and imaginary parts, respectively.
\end{itemize}

\begin{algorithm}[t]
\caption{Approximately Sparse Multipath Channel}
\begin{algorithmic}
\Require $\widehat{\bm H}_{\bm P},~K,~M,~P$
\Ensure $\widehat{\bm h}$
\Function{ApproximatelySparseIFFT}{$\widehat{\bm H}_{\bm P},K,M,P$}
\State \textbf{initialization:} $\widehat{\bm h}\gets0,~n\gets MP,~\varepsilon\gets\frac{1}{4n^2}$
\State $T_A\propto\frac{\log K}{\log\log K}$
\For {$t\gets0,1,\ldots,T_A-1$}
\State $\alpha\propto\frac{1}{(t+1)^4},~B\propto\frac{K}{(t+1)^6},~k\propto K\prod\limits_{i=1,2,\ldots,t}\frac{1}{i^2}$
\State $T_V\propto\log(\frac{B}{k\alpha})$
\State $\mathcal L\gets\textsc{Coordinate}(\widehat{\bm H}_{\bm P},\widehat{\bm h},n,B,\alpha,\varepsilon)$
\State $\widehat{\bm h}\gets\widehat{\bm h}+\textsc{Value}(\widehat{\bm H}_{\bm P},\widehat{\bm h},3k,n,B,\varepsilon,\mathcal L,T_V)$
\EndFor
\State $\widehat{\bm h}\gets\mathop{\arg\max}\limits_{|\mathcal J|=K}\big\|\widehat{\bm h}_{\mathcal J}\big\|_2$\\
\Return $\widehat{\bm h}$
\EndFunction
\Function {Value}{$\widehat{\bm H}_{\bm P},\widehat{\bm h},k,n,B,\varepsilon,\mathcal L,T_V$}
\For {$t\gets0,1,\ldots,T_V-1$}
\State Choose $\sigma$ randomly from $\{1,3,\ldots,n-1\}$
\State Choose $a,b$ randomly from $\{0,1,\ldots,n-1\}$
\State $\bm w^{(t)}\gets\textsc{HashToBins}(\widehat{\bm H}_{\bm P},\widehat{\bm h},n,P_{\sigma,a,b},B,\varepsilon,\alpha)$
\EndFor
\State \textbf{initialization:} $\widehat{\bm h}\gets0$
\ForAll {$\ell\in\mathcal L$}
\State $\widehat h_\ell\gets\mathop{\mathrm{median}}\limits_{t\in\{1,2,\ldots,T_V\}}\left\{w_{\hbar_{\sigma,b}(i)}^{(t)}\mathrm e^{-\mathrm j\frac{2\mathrm\pi}{n}\sigma a\ell}\right\}$
\EndFor
\State $\widehat{\bm h}\gets\mathop{\arg\max}\limits_{|\mathcal J|=k}\big\|\widehat{\bm h}_{\mathcal J}\big\|_2$\\
\Return $\widehat{\bm h}$
\EndFunction
\Function{HashToBins}{$\widehat{\bm H}_{\bm P},\widehat{\bm h},n,P_{\sigma,a,b},B,\alpha,\varepsilon$}
\State $\bm U\gets\bm G_{B,\alpha,\varepsilon}P_{\sigma,a,b}(\widehat{\bm H}_{\bm P})$
\For {$i\gets0,1,\ldots,B-1$}
\State $V_i\gets\sum\limits_jU_{i+Bj}$
\EndFor
\State $\bm v\gets\mathcal{F}^{-1}(\bm V)$
\For {$j\gets0,1,\ldots,B-1$}
\State $w_j\gets v_j-\big[\bm g_{B,\alpha,\varepsilon}*p_{\sigma,a,b}\big(\widehat{\bm h}\big)\big]_{\frac{n}{B}j}$
\EndFor\\
\Return $\bm w$
\EndFunction
\end{algorithmic}
\end{algorithm}
\begin{algorithm}[t]
\begin{algorithmic}
\Function {Coordinate}{$\widehat{\bm H}_{\bm P},\widehat{\bm h},n,B,\alpha,\varepsilon$}
\State\textbf{initialization:} $\ell_i\gets\frac{n}{B}i$ for $i\in\{0,1,\ldots,B-1\}$
\State Choose $\sigma$ randomly from $\{1,3,\ldots,n-1\}$
\State Choose $b$ randomly from $\{0,1,\ldots,n-1\}$
\State $\lambda\gets\frac{n}{B},J\gets\log n,T_C\gets\big\lceil\log_{\frac{J}{4}}\lambda\big\rceil,T_R\gets\big\lceil\log\log n\big\rceil$
\For {$t\gets0,1,\ldots,T_C-1$}
\State $\bm\ell\gets\textsc{Range}(\widehat{\bm H}_{\bm P},\widehat{\bm h},n,B,\sigma,b,\alpha,\varepsilon,\bm\ell,\lambda\left(\frac{4}{J}\right)^t,J,T_R)$
\EndFor
\State $\mathcal L\gets\pi_{\sigma,b}^{-1}(\bm\ell)$\\
\Return $\mathcal L$
\EndFunction
\Function {Range}{$\widehat{\bm H}_{\bm P},\widehat{\bm h},n,B,\sigma,b,\alpha,\varepsilon,\bm\ell,\lambda,J,T_R$}
\State\textbf{initialization:} $\mu_{i,j}\gets0$ for $i\in\{0,1,\ldots,B-1\},~j\in\{0,1,\ldots,J-1\}$
\State $\nu\propto\alpha^{\frac{1}{3}}$
\For {$t\gets0,1,\ldots,T_R-1$}
\State Choose $a$ randomly from $\{0,1,\ldots,n-1\}$
\State Choose random variable $\tau$ evenly distributed from $\big\{\big\lceil\frac{nJ\nu}{4\lambda}\big\rceil,\big\lceil\frac{nJ\nu}{4\lambda}\big\rceil+1,\ldots,\big\lfloor\frac{nJ\nu}{2\lambda}\big\rfloor\big\}$
\State $\bm w\gets\textsc{HashToBins}(\widehat{\bm H}_{\bm P},\widehat{\bm h},n,P_{\sigma,a,b},B,\alpha,\varepsilon)$
\State $\bm w'\gets\textsc{HashToBins}(\widehat{\bm H}_{\bm P},\widehat{\bm h},n,P_{\sigma,a+\tau,b},B,\alpha,\varepsilon)$
\For {$i\gets0,1,\ldots,B-1$}
\For {$j\gets0,1,\ldots,J-1$}
\State $\theta_{i,j}\gets \frac{2\pi}{n}\left(\ell_i+\frac{2j+1}{2J}\lambda+\sigma b\right)\bmod n$
\If {$\min\big\{\pm\big(\tau\theta_{i,j}-\angle\frac{w'_i}{w_i}\big)\bmod 2\pi\big\}\leqslant\pi\nu$}
\State $\mu_{i,j}\gets\mu_{i,j}+1$
\EndIf
\EndFor
\EndFor
\EndFor
\For {$i\gets0,1,\ldots,B-1$}
\State $\mathcal J\gets\left\{j\big|\mu_{i,j}>\frac{T_R}{2}\right\}$
\If {$\mathcal J\neq\O$}
\State $\ell_i\gets\min\limits_{j\in\mathcal J}\left\{\ell_i+\frac{j}{J}\lambda\right\}$
\Else
\State $\ell_i\gets\varnothing$
\EndIf
\EndFor\\
\Return $\bm\ell$
\EndFunction
\end{algorithmic}
\end{algorithm}

Similar to \cite{Hassanieh2014May}, the permutation operator $P_{\sigma,a,b}$ in this subsection is defined as
\begin{equation}
P_{\sigma,a,b}(\bm X)_k=X_{\pi_{\sigma,a}(k)}\mathrm e^{\mathrm j\frac{2\mathrm\pi}{n}\sigma bk}
\end{equation}
where $\pi_{\sigma,a}(k)=\sigma(k-a)\bmod n$. Accordingly, the IDFT of $P_{\sigma,a,b}$ is derived as
\begin{equation}
p_{\sigma,a,b}(\bm x)_{\pi_{\sigma,b}(k)}=x_k\mathrm e^{\mathrm j\frac{2\mathrm\pi}{n}\sigma ak}
\end{equation}
where $\pi_{\sigma,b}(k)=\sigma(k-b)\bmod n$.

As we will see from the proof in Appendix A, Algorithm 2 is suitable for the case when the estimator $\widehat{\bm h}$ has a few dominant taps and thus it can be applied to recover $K$ dominant taps in $\widehat{\bm h}$.

For a sparse multipath channel with AWGN, the complexity of the approximately sparse algorithm is $\mathcal O(K\log MP\log\frac{MP}{K})$, which is more complicated than the exactly case, but still much simpler than the IFFT with $\mathcal O(MP\log MP)$ operations. If the condition $MP\geqslant\mathcal O(\frac{B}{\alpha}\log\frac{MP}{\varepsilon})$ holds, Algorithm 2 can estimate the coordinates and their corresponding values with low RMSE. In Section V, we will present some simulation results to show that the RMSE of channel estimation is influenced by $\eta$ and ultimately determined by both $\rho$ and the design of pilot symbols.

As a result, the SIFFT-based channel estimation algorithm can not only reduce computational complexity, but also return the nonzero channel coefficients and their corresponding coordinates directly, which is significant to the following PIS decoding process.
\section{Partial Intersection Sphere Decoding}
In V-OFDM systems, the performances of several common decoding approaches were analyzed in \cite{Xia2001,Cheng2011,Zhang2006,Han2010,Han2014,Li2012}. It was proved in \cite{Han2010,Han2014,Cheng2011} that the diversity order of the ML decoding is $\min\left\{M,D+1\right\}$. In \cite{Li2012}, it was shown that the diversity order of the minimum mean square error (MMSE) decoding can achieve $\min\left\{\left\lfloor M2^{-R}\right\rfloor,D\right\}+1$, where $R$ represents the spectrum efficiency in bits/symbol, while for the zero-forcing (ZF) decoding, the diversity order is $1$. For all the demodulations of the ML, ZF and MMSE, they need to obtain $\bm{H}=\left\{H_n\right\}_{n=0}^{N-1}$ which is computed by the $N$-point FFT of the zero padded $\bm h$. If $N$ is very large, the computational load is high. For a sparse channel, most entries in $\bm{\mathcal H}_l$ are zero and the nonzero entries are regularly placed. Based on this observation, it may be better to extract nonzero entries over each row and search all possible symbol sequences lying in a certain sphere of radius around the received signal. Hence, the complexity of searching such possible sequences is exponential to the number of nonzero entries in each row of $\bm{\mathcal H}_l$, which is much less than $M$ when $M$ is large as what is studied in this paper. In this section, a partial intersection sphere (PIS) decoding algorithm is proposed for a sparse multipath channel. Here, partial intersection means the intersection of the existed and the current nonzero coordinate sets. In each iteration, the algorithm only needs to compare the current symbol sequences corresponding to the coordinates belonging to the partial intersection with the existed ones.

The proposed PIS decoding algorithm is illustrated as Algorithm 3 and explained below in detail. Assume the sparse channel $\bm h$ has only $K$ nonzero taps with the maximum delay $D$. Denote $\mathcal J$ as the set of coordinates of nonzero taps for the sparse multipath channel $\bm h$, i.e.,
\begin{equation}
\mathcal J=\left\{j\big|j\in\{0,1,\ldots,D\},~h_j\neq0\right\}
\end{equation}
and the cardinality of set $\mathcal J$ is equal to $K$, i.e., $\left|\mathcal J\right|=K$. Considering the special structure of $\bm{\mathcal H}$, denote $\mathcal I$ as the set of the reminders of the nonzero channel coefficient coordinates modulo $M$, i.e.,
\begin{equation}
\mathcal I=\left\{i\big|\forall j\in\mathcal J,~i=j\bmod M\right\}
\end{equation}

Denote $\kappa$ as the cardinality of $\mathcal I$, i.e., $\kappa=\left|\mathcal I\right|$. Suppose $i_0,~i_1,\ldots,~i_{\kappa-1}$ are the $\kappa$ entries in $\mathcal I$ with the ascending order $0\leqslant i_0<i_1<\cdots<i_{\kappa-1}\leqslant M-1$. For the case in this paper, we have $\kappa\leqslant K\ll M$. In what follows, it will be found that the diversity order for the PIS decoding is only related to $\kappa$.

In the $m$th iteration with $0\leqslant m\leqslant M-1$, denote $\mathcal U^{(m)}$ and $\mathcal V^{(m)}$ as the sets of the existed coordinates of the nonzero entries in the first $m-1$ rows and the current coordinates of the nonzero entries in the $m$th row of $\bm{\mathcal H}_l$, respectively. $\mathcal W^{(m)}$ (called partial intersection) is defined as the intersection of $\mathcal U^{(m)}$ and $\mathcal V^{(m)}$, i.e., $\mathcal W^{(m)}=\mathcal U^{(m)}\bigcap\mathcal V^{(m)}$. Note that $\bm{\mathcal H}_l$ is from a pseudo-circulant matrix (7) where the number of nonzero entries in each row is equal to $\kappa$. Recall that $\mathbb{X}$ is the constellation of the transmitted symbol $X_n$. For the initialization, the set of the existed coordinates of the nonzero entries $\mathcal U^{(0)}$ and the set of entire symbol sequences $\mathcal X^{(0)}$ are empty sets, respectively, i.e., $\mathcal U^{(0)}=\O$, $\mathcal X^{(0)}=\O$. Then, we describe the updating process of PIS decoding in the $m$th iteration with $0\leqslant m\leqslant M-1$ as follows.
\begin{enumerate}
\item Extract $\kappa$ entries from the $m$th row and $(m-i_0)\bmod M$th, $(m-i_1)\bmod M$th,$~\ldots~$, $(m-i_{\kappa-1})\bmod M$th columns of $\bm{\mathcal H}_l$, generate $\bm{\mathcal H}_l^{(m)}$ as $1\times\kappa$ vector $\big[\left[\bm{\mathcal H}_l\right]_{m,(m-i_0)\bmod M},\left[\bm{\mathcal H}_l\right]_{m,(m-i_1)\bmod M},\ldots$ $\ldots,\left[\bm{\mathcal H}_l\right]_{m,(m-i_{\kappa-1})\bmod M}\big]$.
\item Search all possible symbol sequences $\bm S=[S_0,S_1,\ldots,S_{\kappa-1}]^{\mathrm T},~\bm S\in\mathbb X^{\kappa}$, that lie in the certain sphere of radius $r$ around the received signal $\bm Y_l^{(m)}$ and generate the set of symbol sequences $\mathcal S^{(m)}$ as
\begin{equation}
\mathcal S^{(m)}=\Big\{\bm S\Big|\bm S\in\mathbb X^{\kappa},~\big|\bm Y_l^{(m)}-\bm{\mathcal{H}}_l^{(m)}\bm S\big|\leqslant r\Big\}
\end{equation}
where $\bm Y_l^{(m)}$ is the $m$th entry of the column vector $\bm Y_l$.
\item For each $\bm S\in\mathcal S^{(m)}$, construct an injective mapping of coordinates $f:~k\rightarrow(m-i_k)\bmod M,~k\in\{0,1,\dots,\kappa-1\}$. For each symbol sequence $\bm X^{(m)}\in\mathcal X^{(m)}$, where $\mathcal X^{(m)}$ is the set of entire symbol sequences generated from the previous iteration, compare the current symbol sequence $\bm S$ for the coordinates belonging to the partial intersection $\mathcal W^{(m)}$ with the existed symbol sequence $\bm X^{(m)}$, namely, if $X_w^{(m)}=S_{f^{-1}(w)}$ holds for all $w\in\mathcal W^{(m)}$, where $X_w^{(m)}$ stands for the $w$th entry in $\bm X^{(m)}$, then $\bm X^{(m)}$ is put into the set of symbol sequences $\mathcal X_{\bm S}^{(m)}$, which can be expressed as $\mathcal X_{\bm S}^{(m)}=\big\{\bm X^{(m)}\big|\bm X^{(m)}\in\mathcal X^{(m)},~\forall w\in\mathcal W^{(m)},~X_w^{(m)}\equiv S_{f^{-1}(w)}\big\}$. Then, insert the symbols whose coordinates belong to the complement of the partial intersection $\mathcal W^{(m)}$ to each symbol sequence $\bm X^{(m)}$, i.e., $\forall v\in\complement_{\mathcal V^{(m)}}\mathcal W^{(m)}$, where $\complement_{\mathcal V^{(m)}}\mathcal W^{(m)}$ stands for the complement of $\mathcal W^{(m)}$ in $\mathcal V^{(m)}$, set $X_v^{(m+1)}=S_{f^{-1}(v)}$, insert $X_v^{(m+1)}$ to each symbol sequence $\bm X^{(m)}$ in $\mathcal X_{\bm S}^{(m)}$ and generate $\bm X^{(m+1)}$, the new set of symbol sequences is thus updated as $\mathcal X_{\bm S}^{(m+1)}=\big\{\bm X^{(m+1)}\big|\bm X^{(m)}\in\mathcal X_{\bm S}^{(m)},~\forall u\in\mathcal U^{(m)},~X_u^{(m+1)}=X_u^{(m)};~\forall v\in\complement_{\mathcal V^{(m)}}\mathcal W^{(m)},~X_v^{(m+1)}=S_{f^{-1}(v)}\big\}$.
\item Repeat Step 3 by enumerating all $\bm S\in\mathcal S^{(m)}$. Then the set of entire symbol sequences $\mathcal X^{(m+1)}$ is obtained by the union of all $\mathcal X_{\bm S}^{(m+1)}$, i.e., $\mathcal X^{(m+1)}=\bigcup\limits_{\bm S\in\mathcal S^{(m)}}\mathcal X_{\bm S}^{(m+1)}$. $\mathcal U^{(m+1)}$ is updated to $\mathcal U^{(m)}\bigcup\mathcal V^{(m)}$ as the existed coordinates of nonzero entries for the next iteration.
\end{enumerate}

After $M$ iterations, the set of possible VB sequences $\mathcal X^{(M)}$ can be ultimately obtained, then choose the symbol sequence $\bm X^{(M)}\in\mathcal X^{(M)}$ with the minimum $\ell^2$ distance of $\big\|\bm Y_l-\bm{\mathcal{H}}_l\bm X^{(M)}\big\|_2$ as the estimation of the transmitted VB $\bm X_l$.
\begin{algorithm}[t]
\caption{Partial Intersection Sphere Decoding}
\begin{algorithmic}
\Require $\bm Y,~\bm h,~D,~L,~M,~r$
\Ensure $\widehat{\bm X}$
\Function{SparsePIS}{$\bm Y,\bm h,D,L,M,r$}
\State \textbf{initialization:} $\mathcal U^{(0)}\gets\O,~\mathcal X^{(0)}\gets\O$
\State Calculate $\bm{\mathcal{H}}_l,~l\in\{0,1,\ldots,L-1\}$ according to (7)
\State $\mathcal J\gets\left\{j\big|j\in\{0,1,\ldots,D\},~h_j\neq0\right\}$
\State $\mathcal I\gets\left\{i\big|\forall j\in\mathcal J,~i\gets j\bmod M\right\},~\kappa\gets|\mathcal I|$
\State $i_0,~i_1,\ldots,~i_{\kappa-1}$ are $\kappa$ entries in $\mathcal I$ with the ascending order $0\leqslant i_0<i_1<\cdots<i_{\kappa-1}\leqslant M-1$
\For {$l\gets0,1,\ldots,L-1$}
\For {$m\gets0,1,\ldots,M-1$}
\State $\bm Y_l^{(m)}$ is the $m$th entry of column vector $\bm Y_l$
\State $\bm{\mathcal{H}}_l^{(m)}$ is $1\times\kappa$ vector aligned as the $m$th row and $(m-i_0)\bmod M$th, $(m-i_1)\bmod M$th$,\ldots,(m-i_{\kappa-1})\bmod M$th columns of $\bm{\mathcal{H}}_l$
\State $\mathcal S^{(m)}\gets\left\{\bm S\Big|\bm S\in\mathbb X^{\kappa},~\big|\bm Y_l^{(m)}-\bm{\mathcal{H}}_l^{(m)}\bm S\big|\leqslant r\right\}$
\State $\mathcal V^{(m)}\gets
\left\{\iota^{(m)}\big|\forall i\in\mathcal I,~\iota^{(m)}\gets(m-i)\bmod M\right\}$
\State $\mathcal W^{(m)}\gets\mathcal U^{(m)}\bigcap\mathcal V^{(m)}$
\State $f:~k\rightarrow(m-i_k)\bmod M,~k\in\{0,1,\dots,\kappa-1\}$
\ForAll {$\bm S\in\mathcal S^{(m)}$}
\State $\mathcal X_{\bm S}^{(m)}\gets\big\{\bm X^{(m)}\big|\bm X^{(m)}\in\mathcal X^{(m)},~\forall w\in\mathcal W^{(m)},~X_w^{(m)}=S_{f^{-1}(w)}\big\}$
\State $\mathcal X_{\bm S}^{(m+1)}\gets\big\{\bm X^{(m+1)}\big|\bm X^{(m)}\in\mathcal X_{\bm S}^{(m)},~\forall u\in\mathcal U^{(m)},~X_u^{(m+1)}\gets X_u^{(m)};~\forall v\in\complement_{\mathcal V^{(m)}}\mathcal W^{(m)},~X_v^{(m+1)}\gets S_{f^{-1}(v)}\big\}$
\EndFor
\State $\mathcal X^{(m+1)}\gets\bigcup\limits_{\bm S\in\mathcal S^{(m)}}\mathcal X_{\bm S}^{(m+1)}$
\State $\mathcal U^{(m+1)}\gets\mathcal U^{(m)}\bigcup\mathcal V^{(m)}$
\EndFor
\State $\widehat{\bm X_l}\gets\mathop{\arg\min}\limits_{\bm X^{(M)}\in\mathcal X^{(M)}}\big\|\bm Y_l-\bm{\mathcal{H}}_l\bm X^{(M)}\big\|_2$
\EndFor\\
\Return $\widehat{\bm X}$
\EndFunction
\end{algorithmic}
\end{algorithm}

For a V-OFDM system with the PIS decoding, assume the CIR $\bm h$ and the average power of complex AWGN $\sigma^2$ are known at the receiver. Denote $\bm S_{\dagger}^{(m)}$ as the correct symbol sequence corresponding to the transmitted symbols, i.e., $\bm S_{\dagger}^{(m)}$ is extracted from the $(m-i_0)\bmod M$th, $(m-i_1)\bmod M$th,$~\ldots~$, $(m-i_{\kappa-1})\bmod M$th entries of $\bm X_l$ and generated as $\big[\left[\bm X_l\right]_{(m-i_0)\bmod M},\left[\bm X_l\right]_{(m-i_1)\bmod M},\ldots$ $\ldots,\left[\bm X_l\right]_{(m-i_{\kappa-1})\bmod M}\big]$. Hence, the distance between $\bm Y_l^{(m)}$ and $\bm{\mathcal{H}}_l^{(m)}\bm S_{\dagger}^{(m)}$, i.e., $\big|\bm Y_l^{(m)}-\bm{\mathcal{H}}_l^{(m)}\bm S_{\dagger}^{(m)}\big|$, is Rayleigh distributed with mean $\frac{\sqrt{\pi}}{2}\sigma$ and variance $\frac{4-\pi}{4}\sigma^2$. According to the cumulative distribution function of Rayleigh distribution, the probability that the transmitted symbol $\bm S_{\dagger}^{(m)}$ lies in the sphere radius $r$ in the $m$th iteration is
\begin{equation}
\Pr\big\{\big|\bm Y_l^{(m)}-\bm{\mathcal{H}}_l^{(m)}\bm S_{\dagger}^{(m)}\big|\leqslant r\big\}=1-\mathrm e^{-\frac{r^2}{\sigma^2}}
\end{equation}

From the previous analysis, the additive noise $\bm\Xi_l$ in (6) is an $M\times1$ vector whose entries are i.i.d. complex Gaussian random variables. Hence, for $m=0,1,\ldots,M-1$, the events that the transmitted symbol $\bm S_{\dagger}^{(m)}$ lies in the sphere are independent and the probabilities that each event occurs are the same. After $M$ iterations, the occurrence of event $\bm X_l\in\mathcal X^{(M)}$ is equivalent to the occurrence of all the $M$ events $\big|\bm Y_l^{(m)}-\bm{\mathcal{H}}_l^{(m)}\bm S_{\dagger}^{(m)}\big|\leqslant r,~m=0,1,\ldots,M-1$, then we have
\begin{equation}
\Pr\left\{\bm X_l\in\mathcal X^{(M)}\right\}=\big(1-\mathrm e^{-\frac{r^2}{\sigma^2}}\big)^M
\end{equation}

Note that $\mathcal X^{(M)}\subseteq\mathbb X^{M}$ is a set of possible VB sequences whose corresponding $\kappa$-dimensional symbol sequences lie in the sphere radius such that (28) holds for all $M$ rows in (6). In fact, the choice of sphere radius $r$ is a tradeoff between the symbol error rate (SER) performance and the computational complexity. With the increase of $r$, $\Pr\left\{\bm X_l\in\mathcal X^{(M)}\right\}$ also increases which consequently improves the SER performance. However, this means that more possible symbol sequences need to be compared in each iteration. The SER of the proposed PIS decoding $P_{\mathrm{PIS}}(r)$ is a function of sphere radius $r$ and calculated by the law of total probability
\begin{align}
P_{\mathrm{PIS}}(r)&=\Pr\big\{\widehat{\bm X_l}\neq\bm X_l\big|\bm X_l\in\mathcal X^{(M)}\big\}\Pr\left\{\bm X_l\in\mathcal X^{(M)}\right\}\nonumber\\
&+\Pr\big\{\widehat{\bm X_l}\neq\bm X_l\big|\bm X_l\notin\mathcal X^{(M)}\big\}\Pr\left\{\bm X_l\notin\mathcal X^{(M)}\right\}
\end{align}
Note that for $\bm X_l\notin\mathcal X^{(M)}$, there is no doubt that $\Pr\big\{\widehat{\bm X_l}\neq\bm X_l\big|\bm X_l\notin\mathcal X^{(M)}\big\}=1$ since $\widehat{\bm X_l}$ is chosen from $\mathcal X^{(M)}$. Denote $P_{\mathcal X^{(M)}}$ as the probability that symbol error occurs conditioned on $\bm X_l\in\mathcal X^{(M)}$, i.e.,
\begin{equation}
P_{\mathcal X^{(M)}}=\Pr\big\{\widehat{\bm X_l}\neq\bm X_l\big|\bm X_l\in\mathcal X^{(M)}\big\}
\end{equation}
Substituting (32) into (31), $P_{\mathrm{PIS}}(r)$ can be further simplified as
\begin{equation}
P_{\mathrm{PIS}}(r)=\big(1-\mathrm e^{-\frac{r^2}{\sigma^2}}\big)^MP_{\mathcal X^{(M)}}+1-\big(1-\mathrm e^{-\frac{r^2}{\sigma^2}}\big)^M
\end{equation}

For a V-OFDM system, the signal vector $\bm X_l$ needs to be specifically rotated/transformed to achieve full diversity for the ML decoding as done in \cite{Han2010}. In Section II B, it was analyzed that if $P$ subchannels with the indices $0,\frac{L}{P},\ldots,L-\frac{L}{P}$ are allocated to transmit pilot symbols, the IFFT/SIFFT-based channel estimation can be applied to recover CIR $\bm h$. For the remaining subchannels allocated to transmit data symbols, it is proved in Appendix B that the diversity order of the ML decoding for a sparse multipath channel is $\kappa$. We describe the diversity order by the exponential equality $P_{\mathrm{ML}}\doteq\rho^{-\kappa}$, which is mathematically defined as $\lim\limits_{\rho\rightarrow\infty}\frac{\ln P_{\mathrm{ML}}}{\ln\rho}=-\kappa$ \cite{Zheng2003}, where the SER of the ML decoding $P_{\mathrm{ML}}$ can be found in \cite{Xia2001,Han2010,Cheng2011}.

Instead of the ML decoding that enumerates symbol constellation and estimates the transmitted symbols with the minimum distance, the PIS decoding first generates the set of possible transmitted symbols $\mathcal X^{(M)}$ and then chooses the transmitted symbols only from $\mathcal X^{(M)}$ with the minimum distance. It is proved in Appendix C that $P_{\mathcal X^{(M)}}\leqslant P_{\mathrm{ML}}$. Accordingly, for sufficiently large $\rho$, $P_{\mathcal X^{(M)}}$ is exponentially less than or equal to $\rho^{-\kappa}$, which can be expressed as $P_{\mathcal X^{(M)}}~\dot\leqslant~\rho^{-\kappa}$, i.e., $\lim\limits_{\rho\rightarrow\infty}\frac{\ln P_{\mathcal X^{(M)}}}{\ln\rho}\leqslant-\kappa$. Therefore, from (33), $P_{\mathrm{PIS}}(r)$ is exponentially less than or equal to $\rho^{-\kappa}$ if and only if $1-\big(1-\mathrm e^{-\frac{r^2}{\sigma^2}}\big)^M$ is exponentially less than or equal to $\rho^{-\kappa}$, i.e.,
\begin{equation}
P_{\mathrm{PIS}}(r)~\dot\leqslant~\rho^{-\kappa}\Longleftrightarrow1-\big(1-\mathrm e^{-\frac{r^2}{\sigma^2}}\big)^M~\dot\leqslant~\rho^{-\kappa}
\end{equation}

We say the sphere radius $r$ is the asymptotically greater than or equal to the sphere radius $r'$, denoted by $r\succcurlyeq r'$, when
\begin{equation}
\lim_{\rho\rightarrow\infty}\frac{\ln P_{\mathrm{PIS}}(r)}{\ln\rho}\leqslant\lim_{\rho\rightarrow\infty}\frac{\ln P_{\mathrm{PIS}}(r')}{\ln\rho}
\end{equation}

For a sufficiently large $\rho=\frac{1}{\sigma^2}$, the infinitesimal $1-\big(1-\mathrm e^{-\frac{r^2}{\sigma^2}}\big)^M$ approximates to $M\mathrm e^{-\frac{r^2}{\sigma^2}}$. Substituting (34) into (35) and supposing $\rho$ is sufficiently large, we have $r\geqslant\sigma\sqrt{\ln M-2\kappa\ln\sigma}$. Furthermore, for a sufficiently large $\rho$, the term $\ln M$ can be neglected compared with $-2\kappa\ln\sigma$, then the necessary and sufficient condition of $P_{\mathrm{PIS}}(r)~\dot\leqslant~\rho^{-\kappa}$ is
\begin{equation}
r\succcurlyeq\sqrt{\frac{\kappa}{\rho}\ln\rho}
\end{equation}

It is well known that the SER of the proposed PIS decoding can not be better than that of the ML decoding. According to (36), we have the following lemma that gives the criterion of sphere radius satisfying $P_{\mathrm{PIS}}\doteq\rho^{-\kappa}$.
\begin{lemma}
The SER of the proposed PIS decoding is exponentially equal to that of the ML decoding in the choice of sphere radius $r\succcurlyeq\sqrt{\frac{\kappa}{\rho}\ln\rho}$ for a sufficiently large $\rho$.
\end{lemma}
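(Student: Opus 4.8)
The plan is to establish the two-sided exponential equality $P_{\mathrm{PIS}}(r)\doteq\rho^{-\kappa}$ by proving $P_{\mathrm{PIS}}(r)\,\dot\leqslant\,\rho^{-\kappa}$ and $P_{\mathrm{PIS}}(r)\,\dot\geqslant\,\rho^{-\kappa}$ separately, under the stated radius condition. The starting point is the closed form (33), which writes $P_{\mathrm{PIS}}(r)$ as a convex combination of the conditional error $P_{\mathcal X^{(M)}}$ and the value $1$, with weights $\big(1-\mathrm e^{-r^2/\sigma^2}\big)^M$ and $1-\big(1-\mathrm e^{-r^2/\sigma^2}\big)^M$. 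First I would import the two facts already available from the appendices: the ML diversity order $P_{\mathrm{ML}}\doteq\rho^{-\kappa}$ (Appendix B) and $P_{\mathcal X^{(M)}}\leqslant P_{\mathrm{ML}}$ (Appendix C), which together give $P_{\mathcal X^{(M)}}\,\dot\leqslant\,\rho^{-\kappa}$. Since $\big(1-\mathrm e^{-r^2/\sigma^2}\big)^M\leqslant1$, the first summand of (33) is bounded above by $P_{\mathcal X^{(M)}}$ and hence already decays at least as fast as $\rho^{-\kappa}$.

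This reduces the upper-bound direction to the miss term, which is precisely the equivalence (34): $P_{\mathrm{PIS}}(r)\,\dot\leqslant\,\rho^{-\kappa}$ iff $1-\big(1-\mathrm e^{-r^2/\sigma^2}\big)^M\,\dot\leqslant\,\rho^{-\kappa}$. The next step is the asymptotic analysis of this term. For large $\rho$ (so $\sigma^2=1/\rho$ small) the quantity $\mathrm e^{-r^2/\sigma^2}$ is small, so I would expand $1-(1-x)^M\sim Mx$ with $x=\mathrm e^{-r^2/\sigma^2}$, turning the requirement into $M\,\mathrm e^{-r^2/\sigma^2}\,\dot\leqslant\,\rho^{-\kappa}$. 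Taking logarithms and using $\sigma^2=1/\rho$, $\ln\sigma=-\tfrac12\ln\rho$, this becomes $r\geqslant\sigma\sqrt{\ln M-2\kappa\ln\sigma}$; dropping the subdominant $\ln M$ against $\kappa\ln\rho$ yields the threshold $r\succcurlyeq\sqrt{\tfrac{\kappa}{\rho}\ln\rho}$ of (36)--(37), read through the ordering (35) on radii.

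For the matching lower bound $P_{\mathrm{PIS}}(r)\,\dot\geqslant\,\rho^{-\kappa}$, I would use the elementary fact that PIS searches only a subset $\mathcal X^{(M)}\subseteq\mathbb X^{M}$ of the full ML candidate set, so its SER cannot be smaller than that of ML, i.e., $P_{\mathrm{PIS}}(r)\geqslant P_{\mathrm{ML}}$. Since $\ln\rho>0$, dividing by it preserves the order in the limit and gives $\lim_{\rho\to\infty}\frac{\ln P_{\mathrm{PIS}}(r)}{\ln\rho}\geqslant-\kappa$. Combined with the upper bound, the limit equals $-\kappa$, which is the assertion.

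The step I expect to be the main obstacle is making the asymptotic reduction rigorous inside the exponential-equality formalism: justifying that $1-(1-x)^M$ and $Mx$ share the same diversity exponent as $\rho\to\infty$, and that $\ln M$ is genuinely negligible against $\kappa\ln\rho$, so that (37) marks the exact boundary rather than merely a sufficient radius. Care is also needed to confirm the ``iff'' in (34): the first summand must not dominate, which is guaranteed by $\big(1-\mathrm e^{-r^2/\sigma^2}\big)^M\to1$ together with $P_{\mathcal X^{(M)}}\,\dot\leqslant\,\rho^{-\kappa}$, so neither summand can yield a faster decay that would invalidate the equivalence.
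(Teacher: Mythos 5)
Your proposal is correct and follows essentially the same route as the paper: it uses the decomposition (33), imports $P_{\mathrm{ML}}\doteq\rho^{-\kappa}$ from Appendix B and $P_{\mathcal X^{(M)}}\leqslant P_{\mathrm{ML}}$ from Appendix C, reduces the upper bound to the miss term via (34) with the expansion $1-(1-x)^M\sim Mx$ yielding the radius threshold (36), and closes with the observation that PIS, searching only $\mathcal X^{(M)}\subseteq\mathbb X^{M}$, cannot outperform ML. The only difference is that you state explicitly (the subset argument and the division by $\ln\rho$) what the paper asserts as ``well known,'' which is a welcome tightening rather than a new approach.
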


For the V-OFDM system, it is known that the complexities with respect to complex multiplication operation of MMSE decoding and ML decoding are $\mathcal O(LM\log M+LM2^R)$ and $\mathcal O(LM^22^{RM})$, respectively. The PIS decoding only needs $M2^{R\kappa}$ trials with $\kappa$ complex multiplication operation in each trial. Hence, the complexity with respect to complex multiplication operation of the PIS decoding is $\mathcal O(\kappa LM2^{R\kappa})$. Besides, the evaluation and comparison operations should be taken into account in PIS decoding, which in fact, may vary from $\mathcal O(\kappa LM)$ to $\mathcal O(\kappa LM2^{RM})$ and are related to the cardinality of $\mathcal X^{(m)}$ and ultimately dependent of sphere radius $r$. As illustrated in Algorithm 3, the evaluation operation is an operator used for assignment where the source $S_{f^{-1}(v)}$ is a complex number and the destination $X_v^{(m)}$ is the $v$th entry in the symbol sequence $X^{(v)}$, i.e., $X_v^{(m)}=S_{f^{-1}(v)}$, while the comparison operation is one of relational operator used to check the equality of two complex numbers $X_w^{(m)}$ and $S_{f^{-1}(w)}$, i.e., $X_w^{(m)}\equiv S_{f^{-1}(w)}$, if the equality holds return $1$, otherwise return $0$. In assembly language, an evaluation operation or a comparison operation usually executes $1$ instruction cycle, whereas a real multiplication operation executes $4$ instruction cycles or slightly more due to hardware. Although different operations may have different execute time, the number of instruction cycles for any operation is fixed and can be seen as a constant. Then, the total complexity depends ultimately on the number of operations executed in the program. With the increase of $\rho$, it is wise to decrease the sphere radius $r$ such that the computational complexities with respect to the evaluation and comparison operations can reach the lower bound $\mathcal O(\kappa LM)$ with probability $1$. Note that $\lim\limits_{\rho\rightarrow\infty}\sqrt{\frac{\kappa}{\rho}\ln\rho}=0$, then we have the following theorem.
\begin{theorem}
For any given small sphere radius, the PIS decoding can achieve the diversity order $\kappa$ which is the same as the ML decoding for a sparse multipath channel, but the computational complexity decreases from $\mathcal O(LM^22^{RM})$ to $\mathcal O(\kappa LM2^{R\kappa})$ with probability $1$.
\end{theorem}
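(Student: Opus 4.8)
The plan is to separate the statement into its two halves---the diversity claim and the complexity claim---and to obtain the first almost immediately from Lemma~1 while reserving the real work for the second. For the diversity order, I would begin from the limit $\lim_{\rho\rightarrow\infty}\sqrt{\frac{\kappa}{\rho}\ln\rho}=0$ recorded just above the statement. Because $P_{\mathrm{PIS}}(r)$ in (33) is monotonically decreasing in $r$, any fixed radius $r>0$ eventually dominates $\sqrt{\frac{\kappa}{\rho}\ln\rho}$ for all large $\rho$, and then the exponent-based comparison (35) gives $r\succcurlyeq\sqrt{\frac{\kappa}{\rho}\ln\rho}$. Invoking the necessary-and-sufficient condition (36) (equivalently Lemma~1) yields $P_{\mathrm{PIS}}(r)~\dot\leqslant~\rho^{-\kappa}$. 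Since $\widehat{\bm X_l}$ is chosen only from $\mathcal X^{(M)}\subseteq\mathbb X^{M}$, the PIS decoder cannot beat the ML decoder, so $P_{\mathrm{PIS}}(r)$ is exponentially bounded below by $P_{\mathrm{ML}}\doteq\rho^{-\kappa}$; squeezing the two exponents forces $P_{\mathrm{PIS}}(r)\doteq\rho^{-\kappa}$, i.e.\ diversity order exactly $\kappa$, for \emph{any} such small fixed radius.

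For the complexity I would first isolate the multiplication cost, which does not depend on the channel realization. In each of the $M$ iterations of Algorithm~3 and for each of the $L$ blocks, forming $\mathcal S^{(m)}$ in (28) requires testing every one of the $2^{R\kappa}$ candidates $\bm S\in\mathbb X^{\kappa}$ against $\big|\bm Y_l^{(m)}-\bm{\mathcal H}_l^{(m)}\bm S\big|\leqslant r$, at a cost of $\kappa$ complex multiplications per candidate for the inner product $\bm{\mathcal H}_l^{(m)}\bm S$. This fixes the multiplication budget at $\mathcal O(\kappa LM2^{R\kappa})$, against $\mathcal O(LM^{2}2^{RM})$ for the ML enumeration over all of $\mathbb X^{M}$. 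The remaining evaluation and comparison operations of Step~3 scale with $|\mathcal S^{(m)}|$ and $|\mathcal X^{(m)}|$, and therefore lie between the lower bound $\mathcal O(\kappa LM)$, reached when the sets $\mathcal X^{(m)}$ stay of bounded size, and the worst case $\mathcal O(\kappa LM2^{RM})$.

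The crux, and the main obstacle, is to show that this lower bound is attained \emph{with probability $1$}, so that the fixed multiplication term dominates the total. Here I would exploit the freedom, granted by the first part, to let $r$ shrink with $\rho$ on the order of $\sqrt{\frac{\kappa}{\rho}\ln\rho}$. At this scale the radius behaves in two opposite ways at once: in absolute terms $r\rightarrow0$, while relative to the noise $r/\sigma=\sqrt{\kappa\ln\rho}\rightarrow\infty$, so that $\Pr\{\bm X_l\in\mathcal X^{(M)}\}=(1-\rho^{-\kappa})^{M}\rightarrow1$ keeps the correct sequence inside every sphere, yet any \emph{wrong} $\bm S\neq\bm S_{\dagger}^{(m)}$, separated from the noiseless received sample by a channel distance $\big|\bm{\mathcal H}_l^{(m)}(\bm S_{\dagger}^{(m)}-\bm S)\big|$ bounded away from zero for a continuously distributed channel, is excluded once $r$ falls below that gap. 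The realizations for which two distinct constellation points share a sphere of vanishing radius form a set whose measure collapses to zero, so almost surely each $\mathcal S^{(m)}$ is a singleton; by the update rule defining $\mathcal X^{(m+1)}$ as a union over $\bm S\in\mathcal S^{(m)}$, no branching ever occurs and every $\mathcal X^{(m)}$ stays of size $\mathcal O(1)$, driving the evaluation/comparison cost to $\mathcal O(\kappa LM)$. Adding this to the multiplication budget leaves the total at $\mathcal O(\kappa LM2^{R\kappa})$ with probability $1$, the claimed reduction from $\mathcal O(LM^{2}2^{RM})$. The delicate point is exactly this measure-theoretic step---quantifying that the spheres isolate the correct sequence almost surely---together with checking that the shrinking $r$ stays large enough, through (36), not to disturb the diversity exponent established first.
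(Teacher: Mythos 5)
Your proposal is correct and follows essentially the same route as the paper's Appendix D: the diversity claim is obtained by combining Lemma~1 (via $\sqrt{\tfrac{\kappa}{\rho}\ln\rho}\rightarrow 0$) with the fact that PIS cannot beat ML, and the complexity claim by shrinking $r$ so that, with the noise term $d^{(m)}$ and the radius both vanishing while the channel gap $\big|\bm{\mathcal{H}}_l^{(m)}(\bm S-\bm S_{\dagger}^{(m)})\big|$ stays almost surely bounded away from zero, each $\mathcal S^{(m)}$ collapses to the singleton $\{\bm S_{\dagger}^{(m)}\}$ with probability $1$, keeping $|\mathcal X^{(m)}|=1$ and the evaluation/comparison cost at $\mathcal O(\kappa LM)$. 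The paper formalizes your measure-theoretic step with the triangle inequality, the enlarged set $\mathcal S_{\dagger}^{(m)}\supseteq\mathcal S^{(m)}$, and the explicit Rayleigh probability $1-\mathrm e^{-(r+d^{(m)})^2/\|\bm S-\bm S_{\dagger}^{(m)}\|_2^2}\rightarrow 0$, but the argument is the same.
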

\begin{proof}
See Appendix D for the proof.
\end{proof}

Therefore, by choosing $r$ asymptotically equal to $\sqrt{\frac{\kappa}{\rho}\ln\rho}$, the proposed PIS decoding algorithm can balance the tradeoff between the SER performance and the computational complexity. Since the diversity order is $\kappa$ that depends on the set of the reminders of the $K$ nonzero channel coefficient coordinates modulo $M$, in practice, for a given channel model, i.e., for a given set of coordinates of nonzero channel coefficients, one may properly choose $M$ such that $\kappa$ is maximized.

\section{Simulation Results}
In this section, we provide simulation results to verify the previous analysis. The BPSK modulation is employed in the V-OFDM system. Sparse multipath channel $\bm h$ is modelled as $K$ i.i.d. complex Gaussian distributed nonzero taps $h_j\thicksim\mathcal{CN}\left(0,1\right),~j\in\mathcal J$ randomly distributed within the maximum delay $D$. We first employ the RMSE to evaluate the performances of the SIFFT-based sparse channel estimation. Then, we give an example of $6$ different channels with deterministic nonzero coordinates to make a comparison of the diversity order. Besides, we investigate the relationship between the BER performance of PIS decoding and the parameters $D$, $K$, $M$, respectively. Furthermore, the PIS decoding is compared with the conventional ZF, MMSE, ML decoding schemes in the V-OFDM system. Finally, channel estimation and decoding algorithm are jointly considered to show the BER performances in both OFDM and V-OFDM systems.

Figs. 3 and 4 show the RMSE performances of the SIFFT-based sparse channel estimation with and without noise, respectively. In the simulation of the SIFFT-based exactly sparse multipath channel estimation, the parameters $\alpha$ and $B$ are set to $\frac{K}{2^{t+4}}$ such that $\frac{B}{\alpha}\log\frac{MP}{\varepsilon}$ is a constant regardless of $K$. It can be seen from Fig. 3 that the RMSE of the channel estimation is below $0.02$ but reduces the complexity to $\mathcal O(K\log MP)$, where $P$ is pilot channel number. The estimation error is mainly caused by the imperfect permutation that the nonzero entries are not separated into different bins. For the SIFFT-based approximately sparse multipath channel estimation, suppose the parameters $\alpha=\frac{1}{(t+1)^4}$ and $B=\frac{K}{(t+1)^6}$ that can keep the collision at a relatively low level. Fig. 4 indicates that with the increase of $\rho$, $K$ dominant entries are slightly influenced by the rest entries $\eta$, which consequently, reduces the RMSE of channel estimation. For instance, when $K=4$, the RMSE of sparse channel estimation is below $0.01$. Whereas when $K=16$, however, there is a sharp decrease for $MP\leqslant18$ since the condition $MP\geqslant\mathcal O(\frac{B}{\alpha}\log\frac{MP}{\varepsilon})$ does not hold in this case. The complexity for the SIFFT-based approximately sparse multipath channel estimation is $\mathcal O(K\log MP\log\frac{MP}{K})$.
\begin{figure}[t]
\centering
\includegraphics[width=3.5in]{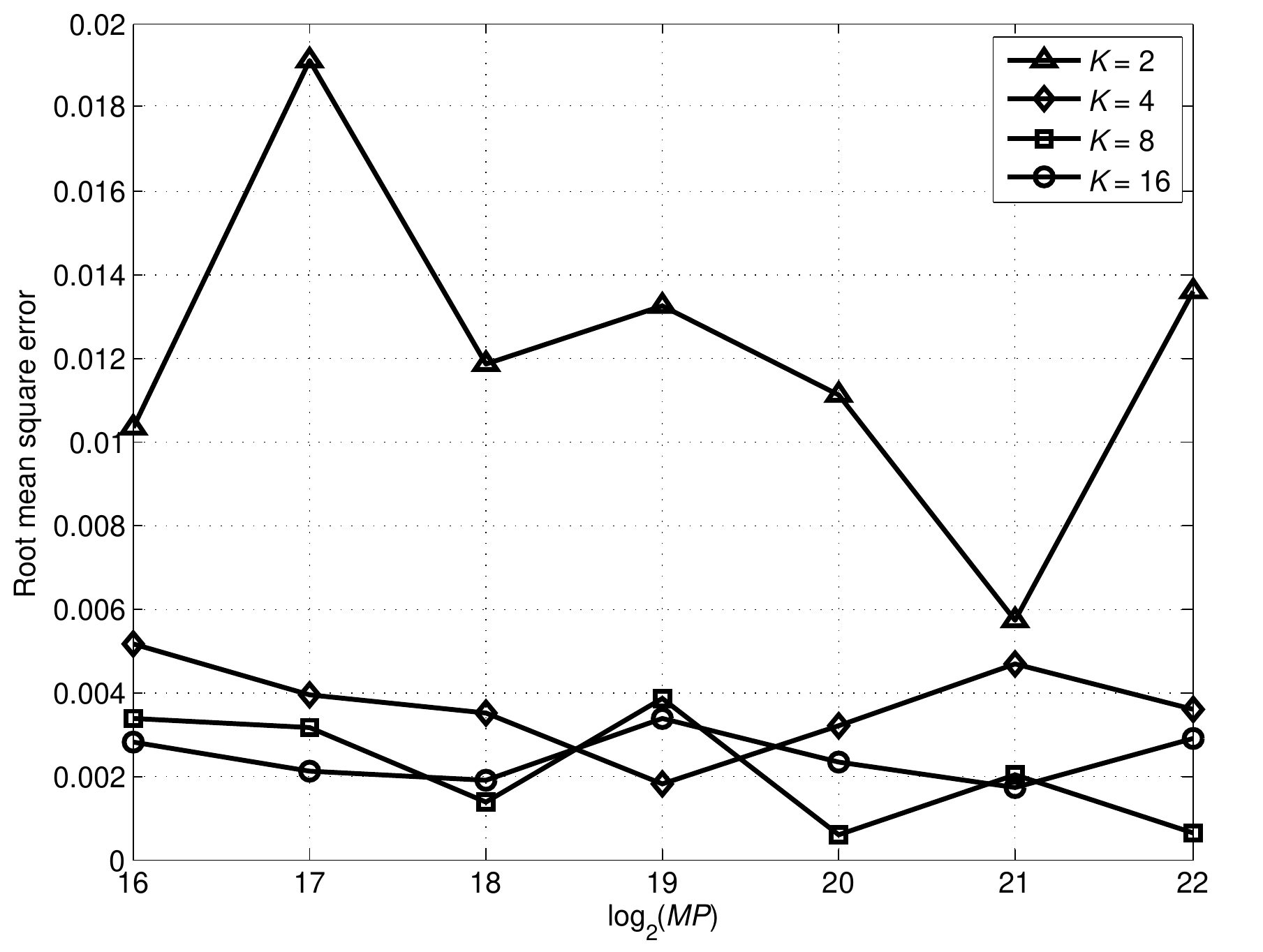}
\caption{SIFFT-based algorithm for exactly sparse multipath channel.}
\end{figure}

In Fig. 5, we give an example of $6$ different channels with deterministic nonzero coordinates and for each channel, the nonzero channel coefficients are i.i.d. complex Gaussian distribution. Suppose $L=256$, $M=8$, $D=32$, and the nonzero coordinates for Channel A: $\mathcal J_{\mathrm A}=\{0\}$, Channel B: $\mathcal J_{\mathrm B}=\{0,3\}$, Channel C: $\mathcal J_{\mathrm C}=\{0,3,8\}$, Channel D: $\mathcal J_{\mathrm D}=\{0,3,9\}$, Channel E: $\mathcal J_{\mathrm E}=\{0,3,9,19\}$, Channel F: $\mathcal J_{\mathrm F}=\{0,3,9,22\}$. Accordingly, the reminders of the nonzero coordinates modulo $M$ for Channel A: $\mathcal I_{\mathrm A}=\{0\}$, Channel B: $\mathcal I_{\mathrm B}=\{0,3\}$, Channel C: $\mathcal I_{\mathrm C}=\{0,3\}$, Channel D: $\mathcal I_{\mathrm D}=\{0,1,3\}$, Channel E: $\mathcal I_{\mathrm E}=\{0,1,3\}$, Channel F: $\mathcal I_{\mathrm F}=\{0,1,3,6\}$. It can be seen from Fig. 5 that the diversity order of Channel A is $1$, Channel B and Channel C are $2$, Channel D and Channel E are $3$, Channel F is $4$. It is pointed out that although $\left|\mathcal J_{\mathrm C}\right|=\left|\mathcal J_{\mathrm D}\right|$ and $\left|\mathcal J_{\mathrm E}\right|=\left|\mathcal J_{\mathrm F}\right|$, their corresponding diversity orders are different. As a result, we can verify the previous analysis that the diversity order of sparse multipath channel is determined by the cardinality of the set of reminder coordinates after mod $M$, rather than the cardinality of coordinate set itself.
\begin{figure}[t]
\centering
\includegraphics[width=3.5in]{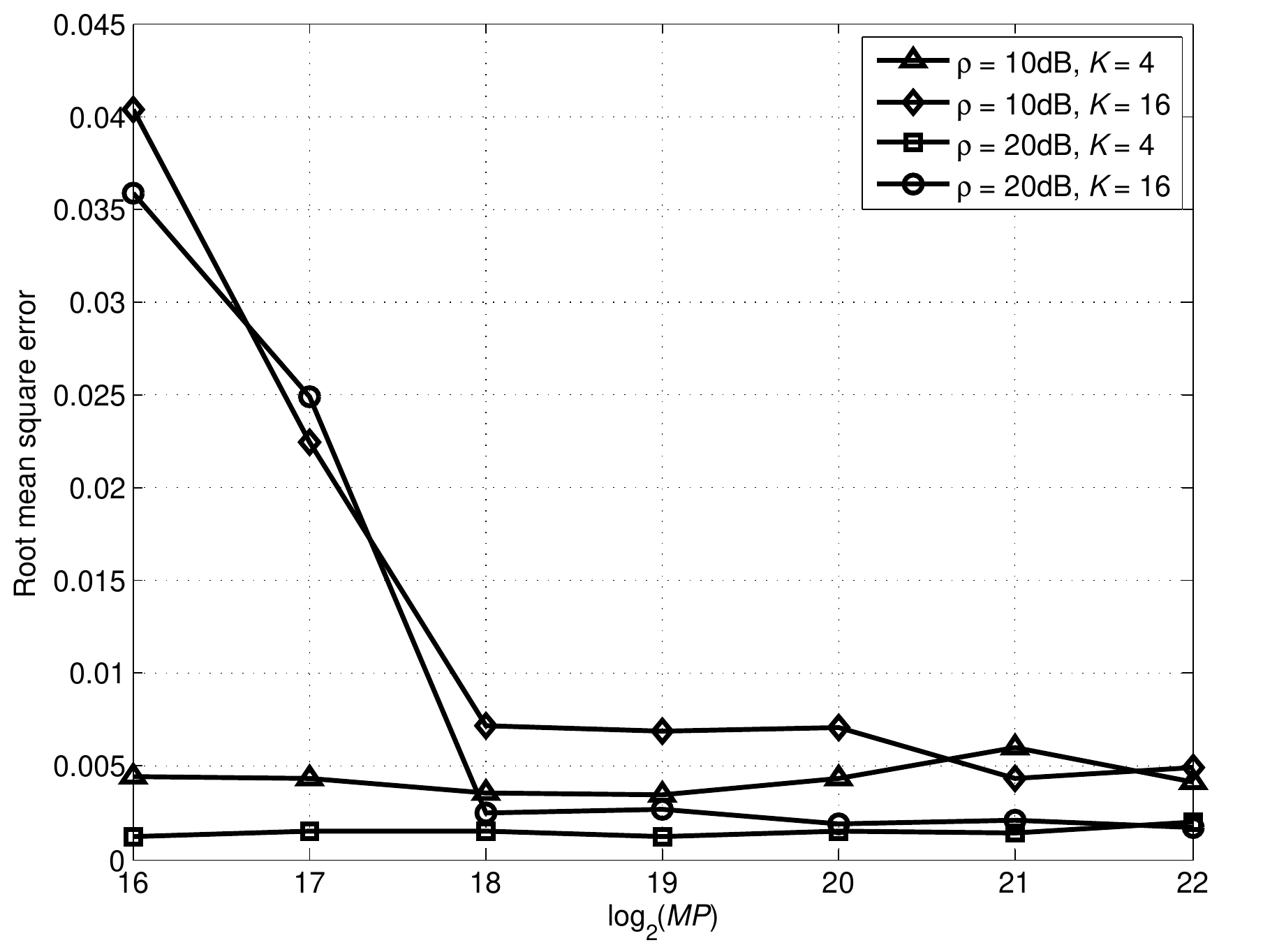}
\caption{SIFFT-based algorithm for approximately sparse multipath channel.}
\end{figure}

Figs. 6$-$8 show how the parameters $D$, $K$, $M$ influence on the BER performance of PIS decoding, respectively. Suppose the transmitted SNR $\rho=10\mathrm{dB}$. In Fig 6, we compare the BER with respect to the maximum delay $D$. It can be seen that the BER is roughly irrelevant to the variation of $D$, since $K$ nonzero taps are randomly distributed within $D+1$ taps. Fig. 7 investigates the relationship between the number of nonzero taps $K$ and the BER performance. Simulation result indicates that with the increase of $K$, the BER decreases almost linearly in the logarithmic scale. The reason is that the diversity order can be directly determined by $\kappa$ and increases with $K$ with large probability. Fig. 8 shows the BER performance with respect to VB size $M$ in the V-OFDM system. The BER first decreases with the increase of $M$ for $M\leqslant8$, whereas for $M\geqslant8$, the BER increases with $M$ instead. On one hand, a larger $M$ can avoid $K$ nonzero taps interacting with each other better after mod $M$, in this case, $\kappa=K$ with high probability which may improve the BER performance. On the other hand, for a given sphere radius $r$, with the increase of $M$, the probability that the transmitted symbols lie in the certain sphere decreases exponentially according to (30), and thus diminishes the advantage of the PIS decoding. Therefore, one can improve the BER performance of the PIS decoding by choosing an appropriate VB size $M$.

Fig. 9 compares different decoding approaches in the V-OFDM system. Suppose the parameters $D=16$, $K=4$, $L=256$, $M=4$, $r=\sqrt{\frac{\kappa}{\rho}\ln\rho}$. Since the condition $K\ll M$ does not hold, $\kappa<K$ with not a small probability which may diminish the multipath diversity orders of the MMSE decoding, ML decoding and PIS decoding. In fact, when $D$ is sufficiently large such that the reminders of the nonzero coordinates modulo $M$ can be regarded randomly distributed at the coordinates $0,1,\ldots,M-1$, for the given $K$ and $M$, the probability mass function of $\kappa$ is $P_{\kappa}=\binom{M}{\kappa}\binom{K}{\kappa}\big(\frac{\kappa}{M}\big)^K\kappa^{-\kappa}\kappa!,~\kappa=1,2,\ldots,\min\left\{K,M\right\}$. After averaging over the random nonzero coordinates of channel, the diversity orders of the ZF decoding, MMSE decoding, ML decoding and PIS decoding are corresponding to the minimum of $\kappa$ and thus equal to $1$. For the different decoding approaches, denote the BERs of ZF decoding, MMSE decoding, ML decoding, PIS decoding as $P_{\mathrm{ZF}}$, $P_{\mathrm{MMSE}}$, $P_{\mathrm{ML}}$, $P_{\mathrm{PIS}}$, respectively. It is well known that $P_{\mathrm{ML}}<P_{\mathrm{MMSE}}<P_{\mathrm{ZF}}$. Simulation result indicates that the PIS decoding loses certain BER performance since (36) does not hold if $\rho$ is not large enough, while for $\rho>10\mathrm{dB}$, the proposed PIS decoding outperforms the ZF decoding and MMSE decoding and gradually approximates to the ML decoding with the increase of $\rho$. Furthermore, the complexity of the PIS decoding decreases with $\rho$ and is much less than the ML decoding.
\begin{figure}[t]
\centering
\includegraphics[width=3.5in]{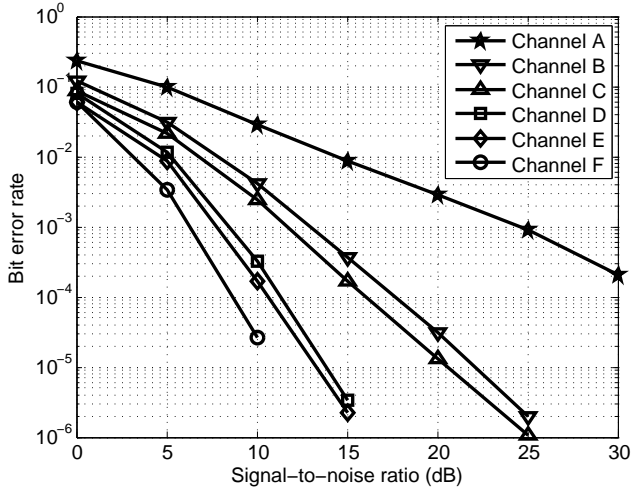}
\caption{Diversity orders for different channels with $L=256$ and $M=8$.}
\end{figure}
\begin{figure}[t]
\centering
\includegraphics[width=3.5in]{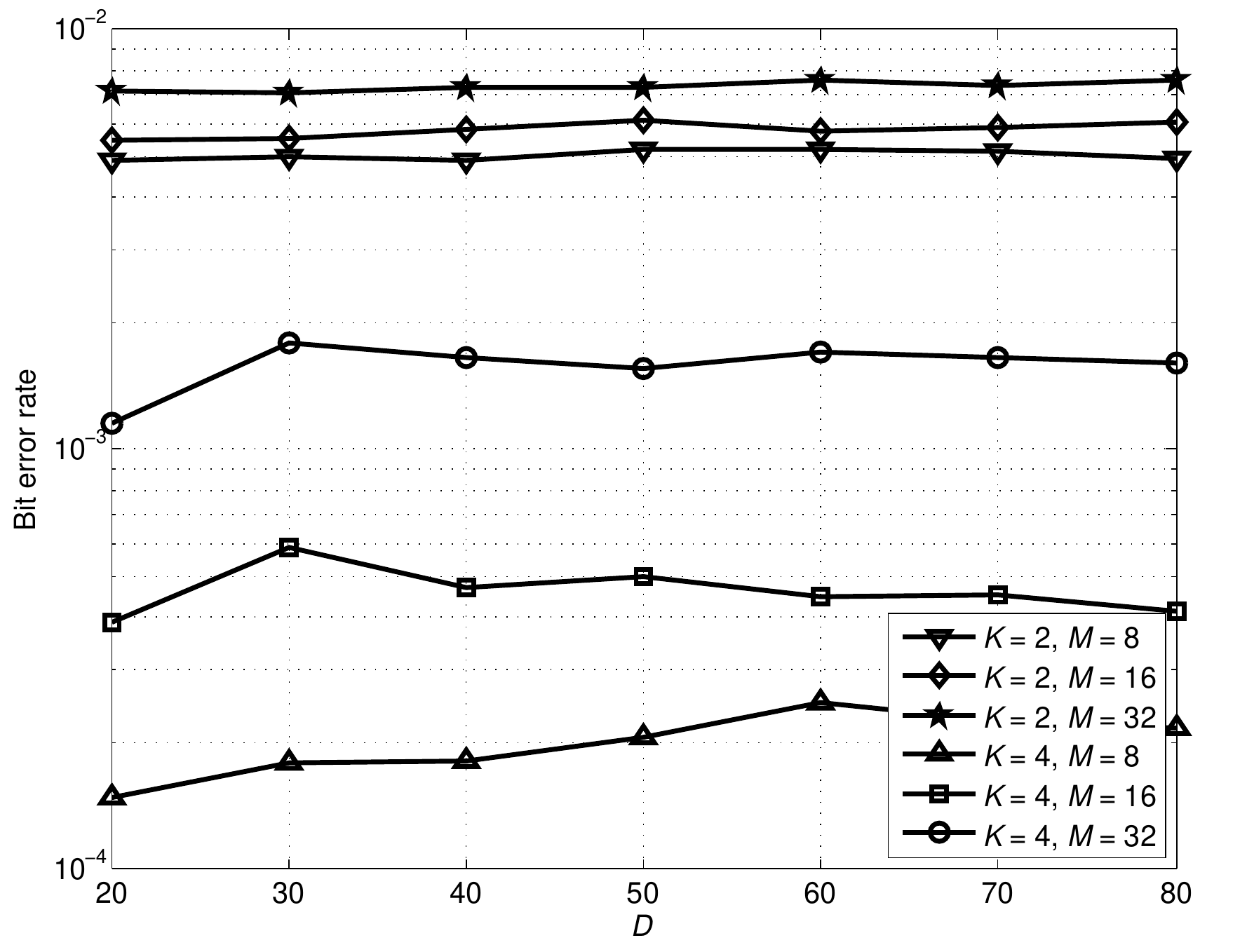}
\caption{PIS decoding for different $D$ with $\rho=10\mathrm{dB}$ and $r=\sqrt{\frac{\kappa}{\rho}\ln\rho}$.}
\end{figure}
\begin{figure}[t]
\centering
\includegraphics[width=3.5in]{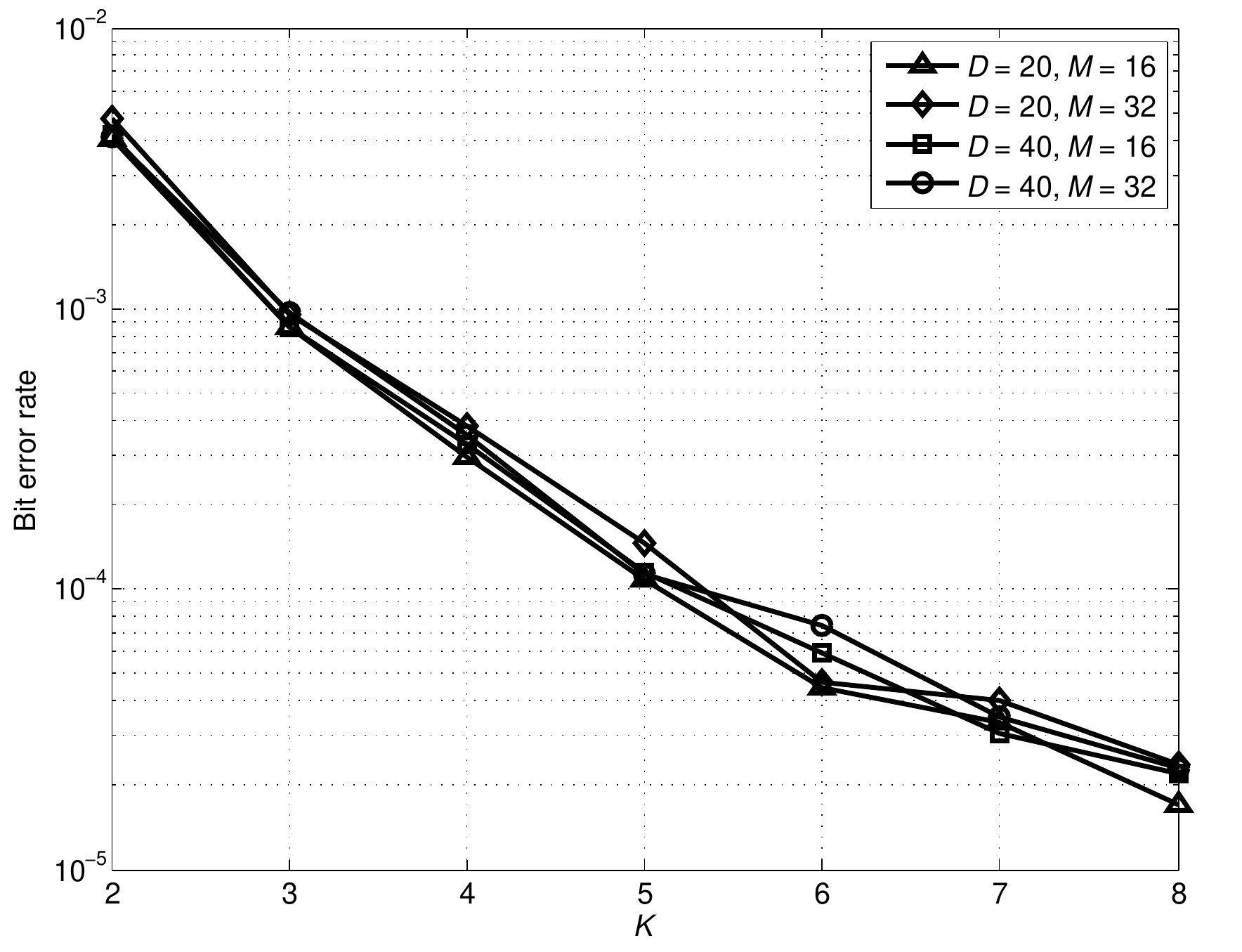}
\caption{PIS decoding for different $K$ with $\rho=10\mathrm{dB}$ and $r=\sqrt{\frac{\kappa}{\rho}\ln\rho}$.}
\end{figure}
\begin{figure}[t]
\centering
\includegraphics[width=3.5in]{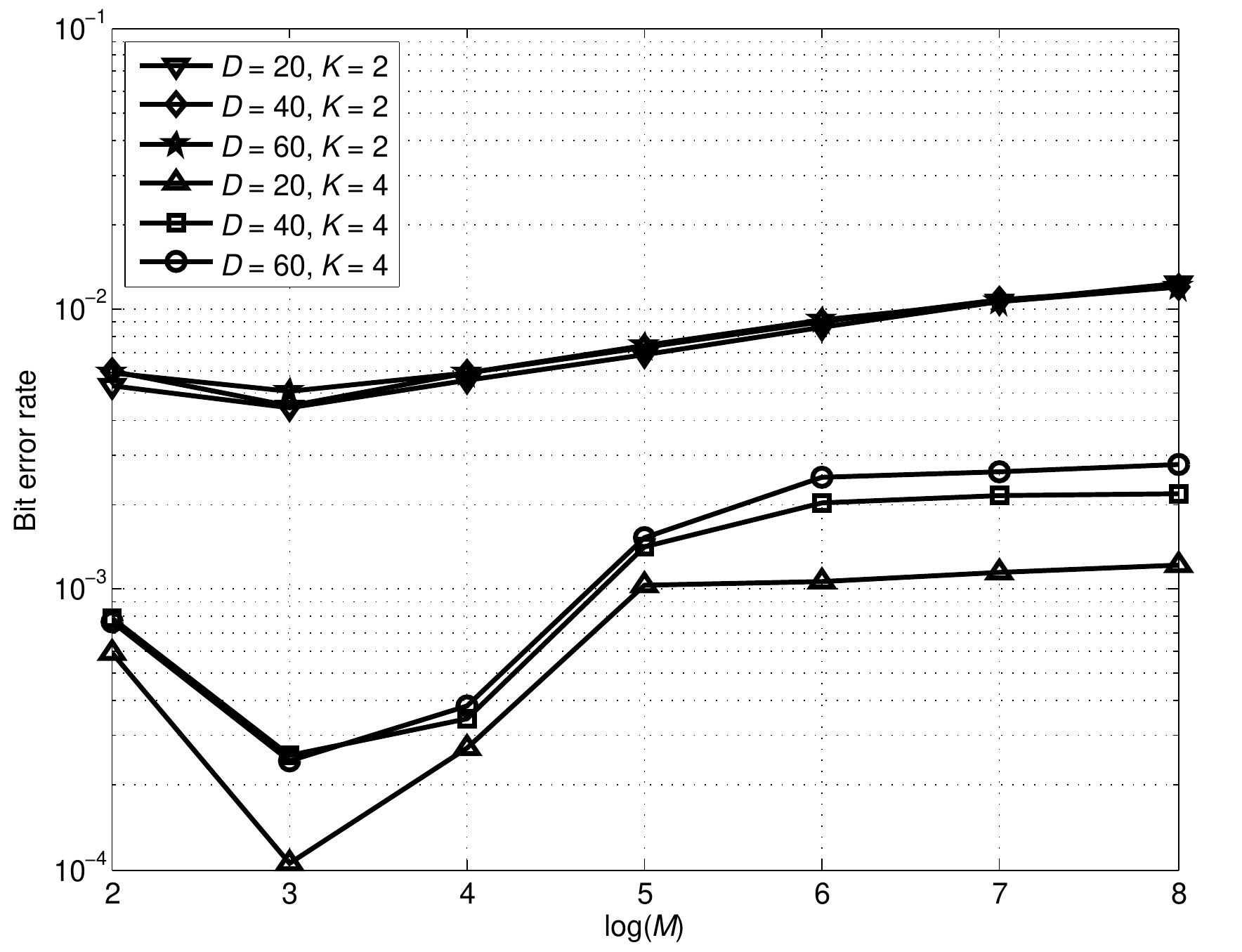}
\caption{PIS decoding for different $M$ with $\rho=10\mathrm{dB}$ and $r=\sqrt{\frac{\kappa}{\rho}\ln\rho}$.}
\end{figure}

In Fig. 10, we consider the channel estimation and decoding algorithms jointly. The BER performance is not only dependent of the decoding approaches, but also influenced by the channel estimation accuracy. Suppose the parameters $D=64$, $K=4$. For the OFDM system, the receiver employs the FFT-based interpolation for channel estimation and symbol-by-symbol decoding with parameters $N=1048576$, pilot channel number $P=65536$. For the V-OFDM system with linear receivers, we estimate channel by the conventional IFFT-based approach, and employ the ZF decoding and the MMSE decoding with parameters $L=131072$, $M=8$, $P=8192$, respectively. For the V-OFDM system with the ML decoding, we estimate the channel by the conventional IFFT-based approach as well with the parameters $L=262144$, $M=4$, $P=16384$. For the V-OFDM system with the PIS decoding, the SIFFT-based algorithm is employed for sparse multipath channel estimation with parameters $L=131072$, $M=8$, $P=8192$. If a slight bias is induced during the process of channel estimation, the sphere radius should not be extremely small since a robust sphere radius is needed to guarantee that the probability of the transmitted symbols lying in the sphere does not decrease. An empirical method to balance the tradeoff between the estimation error and the complexity is to choose sphere radius $r=\max\left\{\sqrt{\frac{\kappa}{\rho}\ln\rho},\sqrt{\frac{\kappa}{\rho_0}\ln\rho_0}\right\},~\rho_0=20\mathrm{dB}$. It can be seen from Fig. 10 that V-OFDM system outperforms the conventional OFDM system. Compared with the ZF decoding and the MMSE decoding, the proposed SIFFT-based channel estimation and PIS decoding reduces the BER significantly.
\begin{figure}[t]
\centering
\includegraphics[width=3.5in]{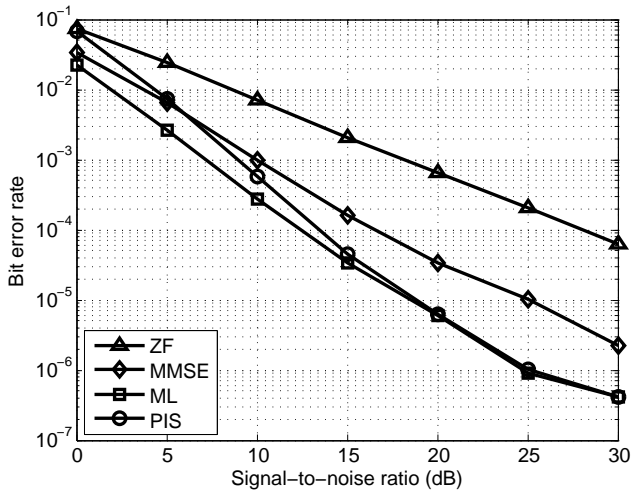}
\caption{Comparison of different decoding schemes with $L=256$ and $M=4$.}
\end{figure}

\section{Conclusion}
In this paper, we investigate sparse multipath channel estimation and decoding for broadband V-OFDM systems. From the system model, if the pilot channels are evenly allocated over multiple subcarriers, the pilot symbols are evenly distributed over the equivalent channels. For the sparse multipath channel estimation, we first design a type of pilot symbols that can minimize the MSE of an estimator. Then, we give SIFFT-based algorithms for exactly and approximately sparse multipath channel estimations corresponding to the cases with and without AWGN induced during the transmission, respectively. The remarkable significance of the SIFFT-based approach is to estimate the nonzero channel coefficients and their corresponding coordinates directly. For the PIS decoding algorithm, the diversity order is determined by not only the number of nonzero taps, but also the coordinates of nonzero taps. Simulation results indicate that the BER performance of the PIS decoding is comparable to that of the ML decoding with certain sphere radius for a sufficiently large SNR, but reduces the complexity substantially.

% if have a single appendix:
%\appendix[Proof of the Zonklar Equations]
% or
%\appendix  % for no appendix heading
% do not use \section anymore after \appendix, only \section*
% is possibly needed

% use appendices with more than one appendix
% then use \section to start each appendix
% you must declare a \section before using any
% \subsection or using \label (\appendices by itself
% starts a section numbered zero.)
%

\appendices
\section{Proof of Sparse Multipath Channel With AWGN}
\begin{proof}
For the pilot signals being transmitted through multipath channel with AWGN, it was derived in (18) that $\widehat{\bm h}$ is an unbiased estimator of $\bm h$. The diagonal entries of (19) corresponding to the variance of $\widehat{\bm h}$, which are all equal to $\frac{\mathrm{tr}\left\{\mathbf\Sigma\right\}}{M^2P^2}$. Hence, $\widehat{\bm h}$ is a random vector and can be regarded as $\bm h$ with an additive noise whose entries are identically distributed with complex Gaussian noise, i.e., $\mathcal{CN}\big(0,\frac{\mathrm{tr}\left\{\mathbf\Sigma\right\}}{M^2P^2}\big)$, but may not be white.
For the sparse channel with only $K$ nonzero taps, $\eta$ approximates the expectation power ratio of $K$ dominant entries in $\widehat{\bm h}$ to the rest entries such that
\begin{equation}
\eta=\frac{\|\bm h\|_2^2+\frac{K}{M^2P^2}\mathrm{tr}\left\{\mathbf\Sigma\right\}}{\frac{MP-K}{M^2P^2}\mathrm{tr}\left\{\mathbf\Sigma\right\}}
\end{equation}
Considering the sparse channel that $K\ll MP$, (37) can thus be further simplified as
\begin{equation}
\eta=\frac{MP\|\bm h\|_2^2}{\mathrm{tr}\left\{\mathbf\Sigma\right\}}
\end{equation}

It was listed in Table I that the designed pilot symbols can reduce the interference efficiently. Suppose $\ell^2$ norm of the sparse channel is normalized, i.e., $\|\bm h\|_2=1$. Since $\mathrm{tr}\left\{\mathbf\Sigma\right\}$ is proportional to $\sigma^2$, we have
\begin{equation}
\eta=\frac{\rho}{\zeta}
\end{equation}
where the typical value of $\zeta$ is $1\sim2$ and roughly independent of $M$.

Therefore, it can be naturally concluded that if the designed pilot symbols are transmitted through the normalized sparse multipath channel with AWGN, then the estimator $\widehat{\bm h}$ is an approximately sparse vector with the parameter $\eta=\frac{\rho}{\zeta}$, where $\rho$ denotes the transmitted SNR, the typical value of $\zeta$ is $1\thicksim2$ and roughly independent of $M$.
\end{proof}

\begin{figure}[t]
\centering
\includegraphics[width=3.5in]{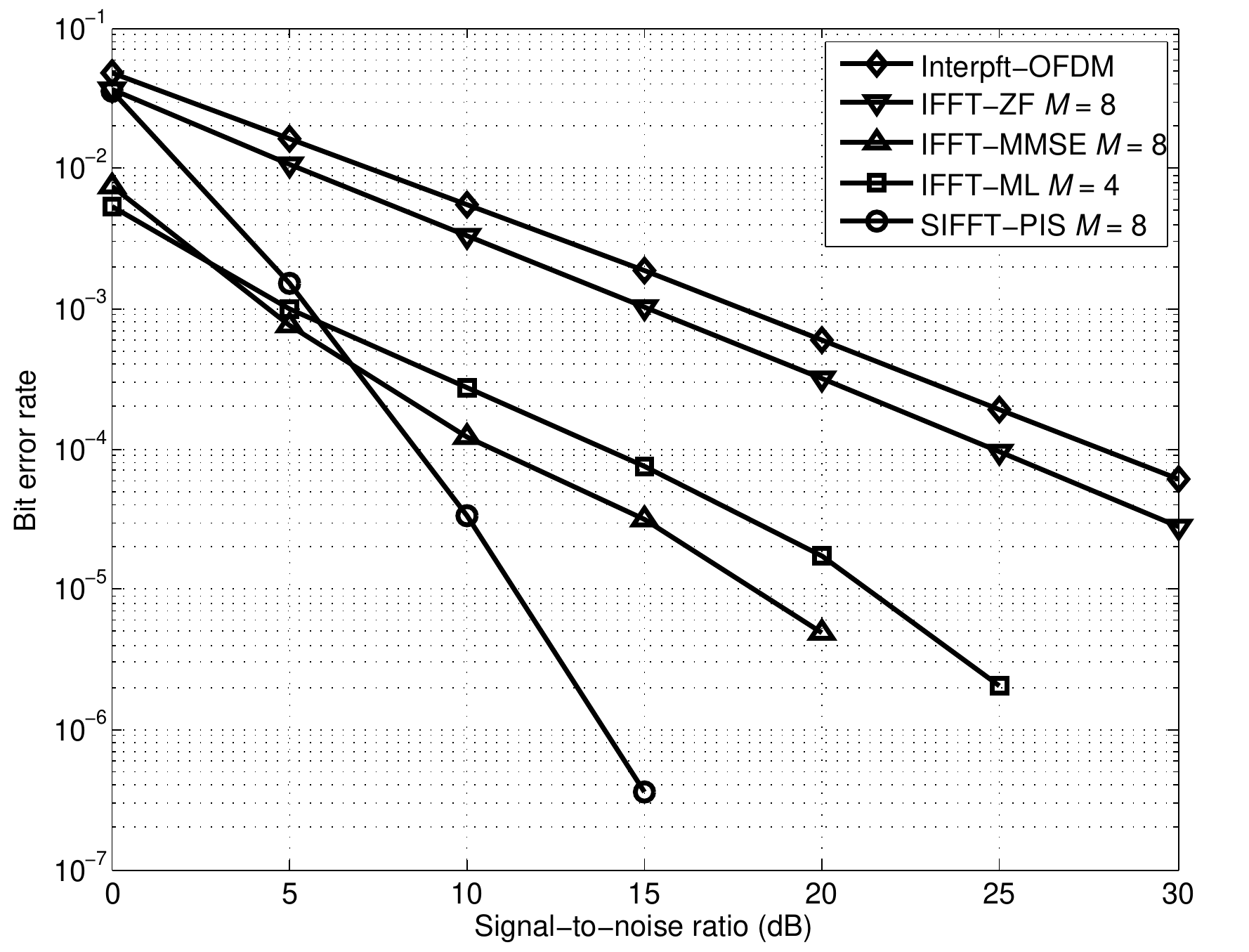}
\caption{Comparison of joint channel estimation and decoding algorithms.}
\end{figure}

\section{Proof of Diversity Order of ML Decoding for a Sparse Multipath Channel}
\begin{proof}
Assume the ideal channel state information is known at the receiver. Denote $\widehat{\bm X}_l$ as the estimation of $\bm X_l$ with the ML decoding, then the SER of the ML decoding conditioned on $\bm{\mathcal H}_l$ is written as $\Pr\big\{\widehat{\bm X}_l\neq\bm X_l\big|\bm{\mathcal H}_l\big\}$. According to the ML decoding, for the transmitted symbol $\bm X_l$ and a distinct symbol $\bm X_l'$ from the symbol constellation, if $\big\|\bm Y_l-\bm{\mathcal{H}}_l\bm X_l\big\|_2>\big\|\bm Y_l-\bm{\mathcal{H}}_l\bm X_l'\big\|_2$ holds, then the symbol error occurs. Since the event $\widehat{\bm X}_l\neq\bm X_l$ is equivalent to $\bigcup\limits_{\bm X_l'\in\mathbb X^M,\bm X_l'\neq\bm X_l}\big\|\bm Y_l-\bm{\mathcal{H}}_l\bm X_l\big\|_2>\big\|\bm Y_l-\bm{\mathcal{H}}_l\bm X_l'\big\|_2$ \cite{Zhang2012}, it is not difficult to derive a lower bound of the SER as
\begin{align}
&\Pr\big\{\widehat{\bm X}_l\neq\bm X_l\big|\bm{\mathcal H}_l\big\}\nonumber\\
&\geqslant\max\limits_{\bm X_l'\in\mathbb X^M,\bm X_l'\neq\bm X_l}\Pr\Big\{\big\|\bm Y_l-\bm{\mathcal{H}}_l\bm X_l\big\|_2>\big\|\bm Y_l-\bm{\mathcal{H}}_l\bm X_l'\big\|_2\Big|\bm{\mathcal H}_l\Big\}
\end{align}
and an upper bound of the SER as
\begin{align}
&\Pr\big\{\widehat{\bm X}_l\neq\bm X_l\big|\bm{\mathcal H}_l\big\}\nonumber\\
&\leqslant\sum\limits_{\bm X_l'\in\mathbb X^M,\bm X_l'\neq\bm X_l}\Pr\Big\{\big\|\bm Y_l-\bm{\mathcal{H}}_l\bm X_l\big\|_2>\big\|\bm Y_l-\bm{\mathcal{H}}_l\bm X_l'\big\|_2\Big|\bm{\mathcal H}_l\Big\}
\end{align}
Furthermore, the total number of elements in the symbol constellation $\mathbb X^M$ is $2^{RM}$, then the upper bound (41) can be further simplified as
\begin{align}
&\Pr\big\{\widehat{\bm X}_l\neq\bm X_l\big|\bm{\mathcal H}_l\big\}\leqslant\big(2^{RM}-1\big)\times\nonumber\\
&\max\limits_{\bm X_l'\in\mathbb X^M,\bm X_l'\neq\bm X_l}\Pr\Big\{\big\|\bm Y_l-\bm{\mathcal{H}}_l\bm X_l\big\|_2>\big\|\bm Y_l-\bm{\mathcal{H}}_l\bm X_l'\big\|_2\Big|\bm{\mathcal H}_l\Big\}
\end{align}

Therefore, the upper and low bounds of the SER have the same tendency and only differ by a constant multiplier. Denote $\bm e_l=\bm X_l'-\bm X_l$, then we have
\begin{equation}
\Pr\Big\{\big\|\bm Y_l-\bm{\mathcal{H}}_l\bm X_l\big\|_2>\big\|\bm Y_l-\bm{\mathcal{H}}_l\bm X_l'\big\|_2\Big|\bm{\mathcal H}_l\Big\}=Q\Big(\frac{\left\|\bm{\mathcal H}_l\bm e_l\right\|_2^2}{2\sigma^2}\Big)
\end{equation}
where the $Q$-function is define as $Q(x)=\frac{1}{\sqrt{2\pi}}\int_x^{+\infty}\mathrm e^{-\frac{t^2}{2}}\mathrm dt$. Substituting (8) into (43) and note that the $\ell^2$ distance does not change after the unitary transformation, (43) can be simplified as
\begin{equation}
\Pr\Big\{\big\|\bm Y_l-\bm{\mathcal{H}}_l\bm X_l\big\|_2>\big\|\bm Y_l-\bm{\mathcal{H}}_l\bm X_l'\big\|_2\Big|\bm{\mathcal H}_l\Big\}=Q\Big(\frac{\left\|\bm H_l\bm U_l\bm e_l\right\|_2^2}{2\sigma^2}\Big)
\end{equation}

Consider the sparse channel $\bm h$ has only $K$ i.i.d. nonzero taps and each nonzero entry is a complex Gaussian random variable with zero mean and unit variance. Recall that $\mathcal J$ is the set of coordinates of the nonzero taps. Suppose $j_0,j_1,\ldots,j_{K-1}$ are the $K$ entries in $\mathcal J$ with the ascending order $0\leqslant j_0<j_1<\ldots<j_{K-1}\leqslant D$. We construct a $K\times1$ vector $\widetilde{\bm h}=\big[h_{j_0},h_{j_1},\ldots,h_{j_{K-1}}\big]^{\mathrm T}$ and all the $K$ entries in $\widetilde{\bm h}$ are i.i.d. Gaussian variables. Denote $\widetilde{\bf F}_l$ as an $M\times K$ matrix constructed by extracting the $l$th, $(l+L)$th, $\ldots$ , $[l+(M-1)L]$th rows, $j_0$th, $j_1$th, $\ldots$ , $j_{K-1}$th columns of the $N$-point FFT matrix without normalization, i.e., $\big[\widetilde{\bf F}_l\big]_{r,c}=\mathrm e^{-\mathrm j\frac{2\mathrm\pi}{N}(l+rL)j_c}$. Then, we have $\bm H_l=\mathrm{diag}\big\{\widetilde{\bf F}_l\widetilde{\bm h}\big\}$. Therefore, (44) can be further rewritten as
\begin{equation}
\Pr\Big\{\big\|\bm Y_l-\bm{\mathcal{H}}_l\bm X_l\big\|_2>\big\|\bm Y_l-\bm{\mathcal{H}}_l\bm X_l'\big\|_2\Big|\bm{\mathcal H}_l\Big\}=Q\Big(\frac{\widetilde{\bm h}^{\mathrm H}{\bf E}_l^{\mathrm H}{\bf E}_l\widetilde{\bm h}}{2\sigma^2}\Big)
\end{equation}
where ${\bf E}_l=\mathrm{diag}\left\{{\bf U}_l\bm e_l\right\}\widetilde{\bf F}_l$. Denote $r_l$ as the rank of ${\bf E}_l^{\mathrm H}{\bf E}_l$, i.e., $r_l=\mathrm{rank}\left({\bf E}_l^{\mathrm H}{\bf E}_l\right)$, and $\lambda_{0,l},\lambda_{1,l},\ldots,\lambda_{r_l-1,l}$ are the $r_l$ nonzero eigenvalues corresponding to such positive semidefinite Hermitan matrices ${\bf E}_l^{\mathrm H}{\bf E}_l$ with the descending order $\lambda_{0,l}\geqslant\lambda_{1,l}\geqslant\ldots\geqslant\lambda_{r_l-1,l}>0$. Denote the constellation of pairwise error $\mathcal E=\big\{\bm e_l\big|\bm e_l=\bm X_l'-\bm X_l\neq\bm 0,\bm X_l\in\mathbb X^M,\bm X_l'\in\mathbb X^M\big\}$. After averaging over the complex Gaussian random channel $\widetilde{\bm h}$, an upper bound of the SER (42) can be calculated by the Chernoff bound as \cite{Han2010,Tarokh1998,Su2004}
\begin{align}
\Pr\big\{\widehat{\bm X}_l\neq\bm X_l\big\}&\leqslant\big(2^{RM}-1\big)\max\limits_{\bm e_l\in\mathcal E}\mathbb E\Big\{\mathrm e^{-\frac{\widetilde{\bm h}^{\mathrm H}{\bf E}_l^{\mathrm H}{\bf E}_l\widetilde{\bm h}}{4\sigma^2}}\Big\}\nonumber\\
&\leqslant\big(2^{RM}-1\big)\Big[\Big(\prod\limits_{i=0}^{r_{l\min}-1}\lambda_{i,l}\Big)^{\frac{1}{r_{l\min}}}\frac{1}{4\sigma^2}\Big]^{-r_{l\min}}
\end{align}
where $r_{l\min}\triangleq\min\limits_{\bm e_l\in\mathcal E}r_l$. Denote $r_{\min}\triangleq\min\limits_{l\in\mathcal L_{\mathrm{D}}}r_{l\min}$ and $\mathcal L_{\mathrm{D}}\triangleq\big\{l\big|0\leqslant l\leqslant L-1,~l\neq0,\frac{L}{P},\ldots,\frac{P-1}{P}L\big\}$. Hence, the SNR of the ML decoding is exponentially less than or equal to the minimum rank of ${\bf E}_l^{\mathrm H}{\bf E}_l$, i.e., $P_{\mathrm{ML}}~\dot\leqslant~\rho^{-r_{\min}}$.

To derive a lower bound of the SER, (40) can be further calculated by the recursion of the integration by parts
\begin{align}
\Pr\big\{\widehat{\bm X}_l\neq\bm X_l\big\}&\geqslant\max\limits_{\bm e_l\in\mathcal E}\mathbb E\Big\{Q\Big(\frac{\widetilde{\bm h}^{\mathrm H}{\bf E}_l^{\mathrm H}{\bf E}_l\widetilde{\bm h}}{2\sigma^2}\Big)\Big\}\nonumber\\
&\geqslant\max\limits_{\bm e_l\in\mathcal E}\sqrt{\frac{\lambda_{0,l}}{\lambda_{0,l}+4\sigma^2}}\sum_{k=r_l}^{+\infty}\frac{(2k-1)!!}{2^{k+1}k!(1+\frac{\lambda_{0,l}}{4\sigma^2})^k}\nonumber\\
&=\sqrt{\frac{\lambda_{0,l}}{\lambda_{0,l}+4\sigma^2}}\sum_{k=r_{l\min}}^{+\infty}\frac{(2k-1)!!}{2^{k+1}k!(1+\frac{\lambda_{0,l}}{4\sigma^2})^k}
\end{align}
where the factorial $k!=1\times2\times\cdots\times k$, and the double factorial $(2k-1)!=1\times3\times\cdots\times(2k-1)$. Then, the SNR of the ML decoding is exponentially greater than or equal to the minimum rank of ${\bf E}_l^{\mathrm H}{\bf E}_l$, i.e., $P_{\mathrm{ML}}~\dot\geqslant~\rho^{-r_{\min}}$.

According to (46) and (47), we conclude that the diversity order of the ML decoding is equal to the minimum rank of ${\bf E}_l^{\mathrm H}{\bf E}_l$ among all non-all-zero $\bm e_l$ and all $l\in\mathcal L_{\mathrm{D}}$, which is also denoted by $r_{\min}=\min\limits_{l\in\mathcal L_{\mathrm{D}}}\min\limits_{\bm e_l\in\mathcal E}\mathrm{rank}\left({\bf E}_l^{\mathrm H}{\bf E}_l\right)$. Note that
\begin{equation}
\mathrm{rank}\left({\bf E}_l^{\mathrm H}{\bf E}_l\right)=\mathrm{rank}\left({\bf E}_l\right)=\mathrm{rank}\big(\mathrm{diag}\left\{{\bf U}_l\bm e_l\right\}\widetilde{\bf F}_l\big)
\end{equation}

Now, we consider two matrices $\mathrm{diag}\left\{{\bf U}_l\bm e_l\right\}$ and $\widetilde{\bf F}_l$ separately. Note that ${\bf U}_l={\bf F}_M{\bf\Lambda}_l$ and ${\bf\Lambda}_l$ is a rotation matrix evolved by V-OFDM modulation itself. According to Theorem 2 in \cite{Cheng2011}, for the pulse-amplitude modulation (PAM) or BPSK modulation, $\min\limits_{\bm e_l\neq0}\mathrm{rank}\left(\mathrm{diag}\left\{{\bf U}_l\bm e_l\right\}\right)=M$ for $l=1,2,\ldots,L-1$, while for the quadrature amplitude modulation (QAM) or quadrature phase-shift keying (QPSK) modulation, $\min\limits_{\bm e_l\neq0}\mathrm{rank}\left(\mathrm{diag}\left\{{\bf U}_l\bm e_l\right\}\right)=M$ for $l=1,2,\ldots,\frac{L}{2}-1,\frac{L}{2}+1,\ldots,L-1$. In other words, for the conventional modulation, $\mathrm{diag}\left\{{\bf U}_l\bm e_l\right\}$ has full rank in most subchannels. Furthermore, from the previous analysis in Section II B, the $P$ subchannels with the indices $0,\frac{L}{P},\ldots,L-\frac{L}{P}$ are allocated to transmit pilot symbols. Hence, $\mathrm{diag}\left\{{\bf U}_l\bm e_l\right\}$ always has full rank for the channels allocated to transmit data symbols.

On the other hand, $\widetilde{\bf F}_l$ can be constructed as $\big[\widetilde{\bm F}_0,\widetilde{\bm F}_1,\ldots,\widetilde{\bm F}_{K-1}\big]$, where the column vector $\widetilde{\bm F}_q=\big[\mathrm e^{-\mathrm j\frac{2\mathrm\pi}{N}lj_q},\mathrm e^{-\mathrm j\frac{2\mathrm\pi}{N}(l+L)j_q},\ldots,\mathrm e^{-\mathrm j\frac{2\mathrm\pi}{N}[l+(M-1)L]j_q}\big]^{\mathrm T}$. Recall that $\mathcal I$ is the coordinates of the nonzero taps modulo $M$ and $i_0,i_1,\ldots,i_{\kappa-1}$ are the $\kappa$ entries in $\mathcal I$ with the ascending order $0\leqslant i_0<i_1<\ldots<i_{\kappa-1}\leqslant M-1$. According to (27), $\forall q\in\{0,1,\ldots,K-1\}$, $\exists p\in\{0,1,\ldots,\kappa-1\}$ and an integer $k$ such that $j_q=i_p+kM$. Then, $\widetilde{\bm F}_q=\mathrm e^{-\mathrm j\frac{2\mathrm\pi}{L}k}\widehat{\bm F}_p$, where $\widehat{\bm F}_p=\big[\mathrm e^{-\mathrm j\frac{2\mathrm\pi}{N}li_p},\mathrm e^{-\mathrm j\frac{2\mathrm\pi}{N}(l+L)i_p},\ldots,\mathrm e^{-\mathrm j\frac{2\mathrm\pi}{N}[l+(M-1)L]i_p}\big]^{\mathrm T}$. If there exists a distinct $q'$ and an integer $k'$ such that $j_{q'}=i_p+k'M$ still holds, then $\widetilde{\bm F}_{q'}=\mathrm e^{-\mathrm j\frac{2\mathrm\pi}{L}k'}\widehat{\bm F}_p=\mathrm e^{-\mathrm j\frac{2\mathrm\pi}{L}(k'-k)}\widetilde{\bm F}_q$. Hence, the vectors $\widetilde{\bm F}_q$ and $\widetilde{\bm F}_{q'}$ are linearly dependent of $\widehat{\bm F}_p$. Since there are only $\kappa$ such $p$ in total, if let $\widehat{\bf F}_l=\big[\widehat{\bm F}_0,\widehat{\bm F}_1,\ldots,\widehat{\bm F}_{\kappa-1}\big]$, we have $\mathrm{rank}\big(\widetilde{\bf F}_l\big)=\mathrm{rank}\big(\widehat{\bf F}_l\big)$. Since the vectors in each column of a DFT matrix are linearly independent, the vector columns $\widehat{\bm F}_p,~\forall p\in\{0,1,\ldots,\kappa-1\}$ which are equal to the $i_p$th columns of the $M$-point FFT matrix without normalization while multiplied by the factor $\mathrm e^{-\mathrm j\frac{2\mathrm\pi}{N}li_p}$ are also linearly independent. Then, the maximal number of linearly independent columns of $\widehat{\bf F}_l$ is $\kappa$. Thus, we have $\mathrm{rank}\big(\widetilde{\bf F}_l\big)=\mathrm{rank}\big(\widehat{\bf F}_l\big)=\kappa$.

Since $\mathrm{diag}\left\{{\bf U}_l\bm e_l\right\}$ is an invertible matrix, $\mathrm{rank}\big(\mathrm{diag}\left\{{\bf U}_l\bm e_l\right\}\widetilde{\bf F}_l\big)=\mathrm{rank}\big(\widetilde{\bf F}_l\big)=\kappa$. As a result, it can be concluded that the diversity order of the ML decoding for a sparse multipath channel is $\kappa$.
\end{proof}

\section{Proof of $P_{\mathcal X^{(M)}}\leqslant P_{\mathrm{ML}}$}
\begin{proof}
Denote the estimation symbol sequence $\widehat{\bm X}_l'$ as the conventional ML decoding of $\bm X_l$. In contrast to the PIS decoding, the SER of the ML decoding can be written as
\begin{equation}
P_{\mathrm{ML}}=\Pr\big\{\widehat{\bm X}_l'\neq\bm X_l\big|\bm X_l\in\mathbb X^M\big\}
\end{equation}
Since $\mathcal X^{(M)}\subseteq\mathbb X^M$, we have
\begin{align}
&P_{\mathcal X^{(M)}}=\Pr\big\{\widehat{\bm X_l}\neq\bm X_l\big|\bm X_l\in\mathcal X^{(M)}\big\}\nonumber\\
&=1-\Pr\Big\{\bm X_l=\mathop{\arg\min}\limits_{\bm X^{(M)}\in\mathcal X^{(M)}}\big\|\bm Y_l-\bm{\mathcal{H}}_l\bm X^{(M)}\big\|_2\Big|\bm X_l\in\mathcal X^{(M)}\Big\}\nonumber\\
&\leqslant1-\Pr\Big\{\bm X_l=\mathop{\arg\min}\limits_{\bm X^{(M)}\in\mathbb X^M}\big\|\bm Y_l-\bm{\mathcal{H}}_l\bm X^{(M)}\big\|_2\Big|\bm X_l\in\mathcal X^{(M)}\Big\}\nonumber\\
&=1-\Pr\Big\{\bm X_l=\mathop{\arg\min}\limits_{\bm X^{(M)}\in\mathbb X^M}\big\|\bm Y_l-\bm{\mathcal{H}}_l\bm X^{(M)}\big\|_2\Big|\bm X_l\in\mathbb X^M\Big\}\nonumber\\
&=\Pr\big\{\widehat{\bm X}_l'\neq\bm X_l\big|\bm X_l\in\mathcal X^{(M)}\big\}\nonumber\\
&=P_{\mathrm{ML}}
\end{align}
\end{proof}
\section{Proof of Theorem 1}
\begin{proof}
Assume sparse channel $\bm h$ has only $K$ i.i.d. nonzero taps and each channel coefficient follows complex Gaussian random distribution. It is not difficult to find that the nonzero entries in $\bm{\mathcal H}_l$ are also complex Gaussian random variables but the variances may not be equal. Recall that $\mathcal S^{(m)}$ was defined in (28) that all possible symbol sequences lying in the certain sphere of radius $r$ around the received signal $\bm Y_l^{(m)}$. $\bm S_{\dagger}^{(m)}$ is the correct symbol sequence corresponding to the transmitted symbols. $d^{(m)}$ is the distance between $\bm Y_l^{(m)}$ and $\bm{\mathcal{H}}_l^{(m)}\bm S_{\dagger}^{(m)}$, i.e., $d^{(m)}=\big|\bm Y_l^{(m)}-\bm{\mathcal{H}}_l^{(m)}\bm S_{\dagger}^{(m)}\big|$. Since the noise $\bm{\Xi}_l$ in (6) is complex AWGN, $d^{(m)}$ is Rayleigh distributed with mean $\frac{\sqrt{\pi}}{2}\sigma$ and variance $\frac{4-\pi}{4}\sigma^2$. According to the triangle inequality, $\forall\bm S\in\mathbb X^{\kappa}$, we have
\begin{equation}
\big|\bm Y_l^{(m)}-\bm{\mathcal{H}}_l^{(m)}\bm S\big|+\big|\bm Y_l^{(m)}-\bm{\mathcal{H}}_l^{(m)}\bm S_{\dagger}^{(m)}\big|\geqslant\big|\bm{\mathcal{H}}_l^{(m)}\big(\bm S-\bm S_{\dagger}^{(m)}\big)\big|
\end{equation}

Let $\mathcal S_{\dagger}^{(m)}$ be the set of a type of symbol sequence $\bm S$ such that the distance between $\bm{\mathcal{H}}_l^{(m)}\bm S$ and $\bm{\mathcal{H}}_l^{(m)}\bm S_{\dagger}^{(m)}$ is less than or equal to the sphere of radius $r+d^{(m)}$, i.e.,
\begin{equation}
\mathcal S_{\dagger}^{(m)}=\left\{\bm S\Big|\big|\bm{\mathcal{H}}_l^{(m)}\big(\bm S-\bm S_{\dagger}^{(m)}\big)\big|\leqslant r+d^{(m)}\right\}
\end{equation}

Compared with $\mathcal S^{(m)}$ defined in (28), it can be found that $\mathcal S^{(m)}\subseteq\mathcal S_{\dagger}^{(m)}$. If $r$ is chosen exponentially equal to $\sqrt{\frac{\kappa}{\rho}\ln\rho}$ such that Lemma 1 is satisfied, then $r\rightarrow0$ as $\rho\rightarrow\infty$. Due to Rayleigh distribution, $d^{(m)}\rightarrow0$ with probability $1$ as $\rho\rightarrow\infty$. Since the probability density function of each nonzero entry in $\bm{\mathcal H}_l$ is a complex Gaussian function, for any given $\bm S\in\mathcal S_{\dagger}^{(m)}$, if $\bm S\neq\bm S_{\dagger}^{(m)}$, we have
\begin{align}
&\Pr\Big\{\big|\bm{\mathcal{H}}_l^{(m)}\big(\bm S-\bm S_{\dagger}^{(m)}\big)\big|\leqslant r+d^{(m)}\Big|\bm S\neq\bm S_{\dagger}^{(m)}\Big\}\nonumber\\
&=1-\mathrm e^{-\frac{\left(r+d^{(m)}\right)^2}{\left\|\bm S-\bm S_{\dagger}^{(m)}\right\|_2^2}}
\end{align}

It is found that for $\rho\rightarrow\infty$, $r+d^{(m)}\rightarrow0$ such that (53) approximates to $0$. Since $\mathcal S_{\dagger}^{(m)}$ has only a bounded finite elements, we have $\Pr\big\{\bm S\neq\bm S_{\dagger}^{(m)}\big|\bm S\in\mathcal S_{\dagger}^{(m)}\big\}=0$, thus, $\mathcal S_{\dagger}^{(m)}$ has only one entry $\bm S_{\dagger}^{(m)}$ with probability $1$ when $\rho\rightarrow\infty$. Obviously, $\bm S_{\dagger}^{(m)}\in\mathcal S^{(m)}$ with probability 1 as $\rho\rightarrow\infty$. Since $\mathcal S^{(m)}$ is a subset of $\mathcal S_{\dagger}^{(m)}$, $\mathcal S^{(m)}$ also has only one entry $\bm S_{\dagger}^{(m)}$ with probability $1$.

For any given small sphere radius $r$, Lemma 1 is satisfied as $\rho\rightarrow\infty$ such that the PIS decoding achieves the same multipath diversity order as the ML decoding, which is equal to $\kappa$. Furthermore, the cardinality $\big|\mathcal X^{(m)}\big|$ of the set $\mathcal X^{(m)}$ of possible symbol sequences in Algorithm 3 remains $1$ in each iteration that the complexity for updating $\mathcal X_{\bm S}^{(m)}$ can be reduced significantly. For $\big|\mathcal X^{(m)}\big|=1$, the evaluation and comparison operations are performed $\mathcal O(\kappa)$ times in the $m$th iteration. Consider $M$ iterations and $L$ subchannels, the complexities of the evaluation and comparison operations are $\mathcal O(\kappa LM)$ with probability $1$. In the previous analysis of the PIS decoding, the complexity of complex multiplication operation is $\mathcal O(\kappa LM2^{R\kappa})$. Since the evaluation and comparison operations for a complex number are faster than a complex multiplication operation, the total complexity of the PIS decoding is $\mathcal O(\kappa LM2^{R\kappa})$ with probability $1$.
\end{proof}

% use section* for acknowledgment
%\section*{Acknowledgment}

% Can use something like this to put references on a page
% by themselves when using endfloat and the captionsoff option.
\ifCLASSOPTIONcaptionsoff
  \newpage
\fi

% trigger a \newpage just before the given reference
% number - used to balance the columns on the last page
% adjust value as needed - may need to be readjusted if
% the document is modified later
%\IEEEtriggeratref{8}
% The "triggered" command can be changed if desired:
%\IEEEtriggercmd{\enlargethispage{-5in}}

% references section

% can use a bibliography generated by BibTeX as a .bbl file
% BibTeX documentation can be easily obtained at:
% http://www.ctan.org/tex-archive/biblio/bibtex/contrib/doc/
% The IEEEtran BibTeX style support page is at:
% http://www.michaelshell.org/tex/ieeetran/bibtex/
%\bibliographystyle{IEEEtran}
% argument is your BibTeX string definitions and bibliography database(s)
%\bibliography{IEEEabrv,../bib/paper}

\begin{thebibliography}{99}

\bibitem{Cimini1985}
L. J. Cimini, ``Analysis and simulation of a digital mobile channel using orthogonal frequency division multiplexing,'' \emph{IEEE Trans. Commun.}, vol. COM-33, no. 7, pp. 665$-$675, July 1985.

\bibitem{Astely2009}
D. Ast\'{e}ly, E. Dahlman, A. Furusk\"{a}r, Y. Jading, M. Lindstr\"{o}m, and S. Parkvall, ``LTE: the evolution of mobile broadband,'' \emph{IEEE Commun. Mag.}, vol. 47, no. 4, pp. 44$-$51, Apr. 2009.

\bibitem{Hwang2009}
T. Hwang, C. Yang, G. Wu, S. Li, and G. Y. Li, ``OFDM and its wireless applications: a survey,'' \emph{IEEE Trans. Veh. Technol.}, vol. 58, no. 4, pp. 1673$-$1694, May 2009.

\bibitem{Wang2005}
N. Wang and S. D. Blostein, ``Comparison of CP-based single carrier and OFDM with power allocation,'' \emph{IEEE Trans. Commun.}, vol. 53, no. 3, pp. 391$-$394, Mar. 2005.

\bibitem{Pancaldi2008}
F. Pancaldi, G. Vitetta, R. Kalbasi, N. Al-Dhahir, M. Uysal, and H. Mheidat, ``Single-carrier frequency-domain equalization,'' \emph{IEEE Signal Process. Mag.}, vol. 25, no. 5, pp. 37$-$56, Sep. 2008.

\bibitem{Falconer2002}
D. Falconer, S. L. Ariyavisitakul, A. Benyamin-Seeyar, and B. Eidson, ``Frequency domain equalization for single-carrier broadband wireless systems,'' \emph{IEEE Commun. Mag.}, vol. 40, no. 4, pp. 58$-$66, Apr. 2002.

\bibitem{Xia2001}
X.-G. Xia, ``Precoded and vector OFDM robust to channel spectral nulls and with reduced cyclic prefix length in single transmit antenna systems,'' \emph{IEEE Trans. Commun.}, vol. 49, no. 8, pp. 1363$-$1374, Aug. 2001.

\bibitem{Li2012}
Y. Li, I. Ngebani, X.-G. Xia, and A. H\o st-Madsen, ``On performance of vector OFDM with linear receivers,'' \emph{IEEE Trans. Signal Process.}, vol. 60, no. 10, pp. 5268$-$5280, Oct. 2012.

\bibitem{Zhang2002}
H. Zhang, X.-G. Xia, Q. Zhang, and W. Zhu, ``Precoded OFDM with adaptive vector channel allocation for scalable video transmission over frequency selective fading channels,'' \emph{IEEE Trans. Mobile Comput.}, vol. 1, no. 2, pp. 132$-$141, Apr. 2002.

\bibitem{Zhang2005}
H. Zhang, X.-G. Xia, L. J. Cimini, and P. C. Ching, ``Synchronization techniques and guard-band-configuration scheme for single-antenna vector-OFDM systems,'' \emph{IEEE Trans. Wireless Commun.}, vol. 4, no. 5, pp. 2454$-$2464, Sep. 2005.

\bibitem{Zhang2006}
H. Zhang and X.-G. Xia, ``Iterative decoding and demodulation for single-antenna vector OFDM systems,'' \emph{IEEE Trans. Veh. Technol.}, vol. 55, no. 4, pp. 1447$-$1454, July 2006.

\bibitem{Han2010}
C. Han, T. Hashimoto, and N. Suehiro, ``Constellation-rotated vector OFDM and its performance analysis over Rayleigh fading channels,'' \emph{IEEE Trans. Commun.}, vol. 58, no. 3, pp. 828$-$838, Mar. 2010.

\bibitem{Han2014}
C. Han and T. Hashimoto, ``Tight PEP lower bound for constellation-rotated vector-OFDM under carrier frequency offset and fast fading,'' \emph{IEEE Trans. Commun.}, vol. 62, no. 6, pp. 1931$-$1943, June 2014.

\bibitem{Cheng2011}
P. Cheng, M. Tao, Y. Xiao, and W. Zhang, ``V-OFDM: on performance limits over multi-path Rayleigh fading channels,'' \emph{IEEE Trans. Commun.}, vol. 59, no. 7, pp. 1878$-$1892, July 2011.

\bibitem{Ngebani2014}
I. Ngebani, Y. Li, X.-G. Xia, S. A. Haider, A. Huang, and M. Zhao, ``Analysis and compensation of phase noise in vector OFDM systems,'' \emph{IEEE Trans. Signal Process.}, vol. 62, no. 23, pp. 6143$-$6157, Dec. 2014.

\bibitem{Win2002}
M. Z. Win and R. A. Scholtz, ``Characterization of an ultra-wideband wireless indoor channel: a communication-theoric view,'' \emph{IEEE J. Select Areas Commun.}, vol. 20, no. 9, pp. 1613$-$1627, Dec. 2002.

\bibitem{Schreiber1995}
W. F. Schreiber, ``Advanced television systems for terrestrial broadcasting: some problems and some proposed solutions,'' \emph{Proc. IEEE}, vol. 83, no. 6, pp. 958$-$981, June 1995.

\bibitem{Berger2010}
C. R. Berger, S. Zhou, J. C. Preisig, and P. Willett, ``Sparse channel estimation for multicarrier underwater acounstic communication: From subspace methods to compressed sensing,'' \emph{IEEE Trans. Signal Process.}, vol 58, no. 3, pp. 1708$-$1721, Mar. 2010.

\bibitem{Cotter2002}
S. F. Cotter and B. D. Rao, ``Sparse channel estimation via matching pursuit with application to equalization,'' \emph{IEEE Trans. Commun.}, vol. 50, no. 3, pp. 374$-$377, Mar. 2002.

\bibitem{Raghavendra2005}
M. R. Raghavendra and K. Giridhar, ``Improving channel estimation in OFDM systems for sparse multipath channels,'' \emph{IEEE Signal Process. Lett.}, vol. 12, no. 1, pp. 52$-$55, Jan. 2005.

\bibitem{Bajwa2010}
W. U. Bajwa, J. Haupt, A. M. Sayeed, and R. Nowak, ``Compressed channel sensing: a new approach to estimating sparse multipath channels,'' \emph{Proc. IEEE}, vol. 98, no. 6, pp. 1058$-$1076, June 2010.

\bibitem{Kannu2011}
A. P. Kannu and P. Schniter, ``On communication over unknown sparse frequency-selective block-fading channels,'' \emph{IEEE Trans. Inf. Theory}, vol. 57, no. 10, pp. 6619$-$6632, Oct. 2011.

\bibitem{Hassibi2005}
B. Hassibi and H. Vikalo, ``On the sphere-decoding algorithm I. Expected complexity,'' \emph{IEEE Trans. Signal Process.}, vol. 53, no. 8, pp. 2806$-$2818, Aug. 2005.

\bibitem{Vikalo2005}
H. Vikalo and B. Hassibi, ``On the sphere-decoding algorithm II. Generalizations, second-order statistics, and applications to communications,'' \emph{IEEE Trans. Signal Process.}, vol. 53, no. 8, pp. 2819$-$2834, Aug. 2005.

\bibitem{Barik2014}
S. Barik and H. Vikalo, `` Sparsity-aware sphere decoding: Algorithms and complexity analysis,'' \emph{IEEE Trans. Signal Process.}, vol. 62, no. 9, pp. 2212$-$2225, May. 2014.

\bibitem{Prasad2014}
R. Prasad, C. R. Murthy, and B. D. Rao, ``Joint approximately sparse channel estimation and data detection in OFDM systems using sparse Bayesian learning,'' \emph{IEEE Trans. Signal Process.}, vol. 62, no. 14, pp. 3591$-$3603, July 2014.

\bibitem{Hassanieh2012Jan}
H. Hassanieh, P. Indyk, D. Katabi, and E. Price, ``Simple and practical algorithm for sparse Fourier transform,'' in \emph{Proc. ACM-SIAM Symposium on Discrete Algorithms}, pp. 1183$-$1194, Jan. 2012.

\bibitem{Hassanieh2014May}
H. Hassanieh, P. Indyk, D. Katabi, and E. Price, ``Nearly optimal sparse Fourier transform,'' in \emph{Proc. ACM Symposium on Theory of Computing}, pp. 563$-$578, May 2012.

\bibitem{Zheng2003}
L. Zheng and D. N. C. Tse, ``Diversity and multiplexing: a fundamental tradeoff in multiple-antenna channels,'' \emph{IEEE Trans. Inf. Theory}, vol. 49, no. 5, pp. 1073$-$1096, May 2003.

\bibitem{Tarokh1998}
V. Tarokh, N. Seshadri, and A. R. Calderbank, ``Space-time codes for high data rate wireless communication: performance criterion and code construction,'' \emph{IEEE Trans. Inf. Theory}, vol. 44, no. 2, pp. 744$-$765, Mar. 1998.

\bibitem{Zhang2012}
S. Zhang, X.-G. Xia, and J. Wang, ``Cooperative performance and diversity gain of wireless relay networks,'' \emph{IEEE J. Sel. Areas Commun.}, vol. 30, no. 9, pp. 1623$-$1632, Oct. 2012.

\bibitem{Su2004}
W. Su and X.-G. Xia, ``Signal constellation for quasi-orthogonal space-time block codes with full diversity,'' \emph{IEEE Trans. Inf. Theory}, vol. 50, no. 10, pp. 2331$-$2347, Oct. 2004.
\end{thebibliography}
%
% <OR> manually copy in the resultant .bbl file
% set second argument of \begin to the number of references
% (used to reserve space for the reference number labels box)

\balance

% biography section
%
% If you have an EPS/PDF photo (graphicx package needed) extra braces are
% needed around the contents of the optional argument to biography to prevent
% the LaTeX parser from getting confused when it sees the complicated
% \includegraphics command within an optional argument. (You could create
% your own custom macro containing the \includegraphics command to make things
% simpler here.)
%\begin{IEEEbiography}[{\includegraphics[width=1in,height=1.25in,clip,keepaspectratio]{mshell}}]{Michael Shell}
% or if you just want to reserve a space for a photo:

%%\begin{IEEEbiography}{Michael Shell}
%%Biography text here.
%%\end{IEEEbiography}

% if you will not have a photo at all:
%%\begin{IEEEbiographynophoto}{John Doe}
%%Biography text here.
%%\end{IEEEbiographynophoto}

% insert where needed to balance the two columns on the last page with
% biographies
%\newpage

%%\begin{IEEEbiographynophoto}{Jane Doe}
%%Biography text here.
%%\end{IEEEbiographynophoto}

% You can push biographies down or up by placing
% a \vfill before or after them. The appropriate
% use of \vfill depends on what kind of text is
% on the last page and whether or not the columns
% are being equalized.

%\vfill

% Can be used to pull up biographies so that the bottom of the last one
% is flush with the other column.
%\enlargethispage{-5in}

% that's all folks
\end{document}